\documentclass[a4paper, authoryear, 12pt]{elsarticle}

\overfullrule = 0pt \topmargin -0.2in \textwidth 6.5in \textheight
9in \oddsidemargin 0.0in \evensidemargin 0.0in
\usepackage[english]{babel}
\usepackage{babel}
\usepackage{fontenc}
\usepackage{graphicx}
\usepackage{amsmath,amsthm,amssymb}
\usepackage{natbib}
\usepackage[colorlinks=true,citecolor=blue]{hyperref}
\usepackage{color}
\usepackage{graphicx}
\usepackage{subfigure}
\usepackage{color}
\usepackage{newlfont}
\usepackage{multirow}
\usepackage{longtable} 
\setlength{\LTcapwidth}{6in} 
\usepackage{ifthen}
\usepackage{alltt}
\usepackage{enumerate}
\usepackage[text={6.25in,8.5in},centering]{geometry}
\usepackage{tikz}
\usepackage{epstopdf}
\usepackage{float}
\usepackage{arydshln}
\usepackage[utf8]{inputenc}
\usepackage{tikz}
\usetikzlibrary{shapes,arrows}
\numberwithin{equation}{section}
\newtheorem{theorem}{Theorem}[section]
\newtheorem{lemma}{Lemma}[section]
\newtheorem{proposition}{\bf Proposition}[section]
\newtheorem{remark}{Remark}[section]

\input epsf.sty
\usepackage{lineno}
\journal{arXiv}

\begin{document}
	\title{Modeling the effects of prosocial awareness on COVID-19 dynamics: A case study on Colombia}
	\author[ISI]{Indrajit Ghosh \footnote{Corresponding author. Email: indra7math@gmail.com, indrajitg\_r@isical.ac.in}}
	\author[UFL]{Maia Martcheva}
    \address[ISI]{Agricultural and Ecological Research Unit, Indian Statistical Institute, Kolkata - 700 108, West Bengal, India}
    \address[UFL]{Department of Mathematics, University of Florida, Gainesville, FL 32611, USA}	

\begin{abstract}
	The ongoing COVID-19 pandemic has affected most of the countries on Earth. It has become a pandemic outbreak with more than 24 million confirmed infections and above 840 thousand deaths worldwide. In this study, we consider a mathematical model on COVID-19 transmission with the prosocial awareness effect. The proposed model can have four equilibrium states based on different parametric conditions. The local and global stability conditions for awareness free, disease-free equilibrium is studied. Using Lyapunov function theory and LaSalle Invariance Principle, the disease-free equilibrium is shown globally asymptotically stable under some parametric constraints. The existence of unique awareness free, endemic equilibrium and unique endemic equilibrium is presented. We calibrate our proposed model parameters to fit daily cases and deaths from Colombia. Sensitivity analysis indicate that transmission rate and learning factor related to awareness of susceptibles are very crucial for reduction in disease related deaths. Finally, we assess the impact of prosocial awareness during the outbreak and compare this strategy with popular control measures. Results indicate that prosocial awareness has competitive potential to flatten the curve.
\end{abstract}

\begin{keyword}
	COVID-19, Prosocial awareness, Mathematical model, Stability analysis, Data analysis.
\end{keyword}

\maketitle

\section{Introduction}
The ongoing outbreak of coronavirus disease 2019 (COVID-19), caused by SARS-CoV-2 virus, a highly contagious virus, has been a massive threat for governments of many affected countries. COVID-19 is causing obstacles for public health organizations and is affecting almost every aspect of human life. The outbreak was declared a pandemic of international concern by WHO on March $11^{th}$, 2020 \cite{Who2020}. The virus can cause a range of symptoms including dry cough, fever, fatigue, breathing difficulty, and bilateral lung infiltration in severe cases, similar to those caused by SARS-CoV and MERS-CoV infections \cite{huang2020clinical, gralinski2020return}. Many people may experience non-breathing symptoms including nausea, vomiting and diarrhea \cite{cdcgov2020}. Chan et. al \cite{chan2020familial} confirmed that the virus spreads through close contact of humans. It has become an epidemic outbreak with more than $24$ million confirmed infections and above $840$ thousand deaths worldwide as of August $28^{th}$, 2020 \cite{Worldometer2020}. 

Since first discovery and identification of coronavirus in 1965, three major outbreaks occurred, caused by emerging, highly pathogenic coronaviruses, namely the 2003 outbreak of Severe Acute Respiratory Syndrome (SARS) in mainland China \cite{gumel2004modelling,li2003angiotensin}, the 2012 outbreak of Middle East Respiratory Syndrome (MERS) in Saudi Arabia \cite{de2013commentary,sardar2020realistic}, and the 2015 outbreak of MERS in South Korea \cite{cowling2015preliminary,kim2017middle}. These outbreaks resulted in SARS and MERS cases confirmed by more than $8000$ and $2200$, respectively \cite{kwok2019epidemic}. The COVID-19 is caused by a new  genetically similar corona virus to the viruses that cause SARS and MERS. Despite a relatively lower death rate compared to SARS and MERS, the COVID-19 spreads rapidly and infects more people than the SARS and MERS outbreaks. In spite of strict intervention measures implemented in various affected areas, the infection spread around the globe very rapidly. Due to nonavailability of vaccines and specific medications, non-pharmaceutical control measures such a social distancing, lockdown, use of mask, use of PPE kits, awareness through media are studied using different theoretical frameworks \cite{ngonghala2020mathematical}. 

Mathematical modeling based on differential equations may provide a comprehensive mechanism for the dynamics of the disease and also to test the efficacy of the control strategies to reduce the burden of COVID-19. Several studies were performed using real-life data from the affected countries and analyzed various features of the outbreak as well as assess the impact of intervention such as lockdown approaches to suppress the outbreak in the concerned countries \cite{kucharski2020early, sardar2020assessment, tang2020estimation,  zhao2020preliminary, asamoah2020global}. There has been a few mathematical models to assess the impact of awareness campaigns against COVID-19 \cite{yan2020impact, zhou2020effects,chang2020studying,khajanchi2020dynamics,kobe2020modeling,mbabazi2020mathematical,mohsen2020global}. These research articles mainly incorporate the awareness through media campaigns. The media effect is modelled in two ways: by adding media compartment to COVID-19 model \cite{chang2020studying, khajanchi2020dynamics,mbabazi2020mathematical,zhou2020effects, kobe2020modeling} and through reduction in incidence function due to media campaigns \cite{yan2020impact, mohsen2020global}. However, there is a scope of investigating pro-social awareness on the dynamics of COVID-19 transmission. The idea is that the aware susceptible persons will pass the information (regarding use face mask, social distancing, mortality due to COVID-19 etc.) to the unaware susceptible individuals. The unaware people become aware by contacting the aware susceptible and practice the self-protection measures. 

As a case study, we use daily notified cases and deaths in Colombia. With over $50$ million inhabitants Colombia is the third-most-populous country in Latin America. On March $6^{th}$, 2020, Colombia reported the first confirmed case of COVID-19. On $17^{th}$ March, President Iván Duque spoke to the Colombians and declared the state of emergency, announcing that he would take economic measures that were announced the following day. The first measure taken seeking the protection of the elderly is to decree mandatory isolation from $20^{th}$ March, 2020 to $31^{st}$ May, 2020 for all adults over 70 years of age. They must remain in their residences except to stock up on food or access health or financial services. Government entities were instructed to make it easier for them to receive their pensions, medicines, healthcare or food at home. On the evening of $20^{th}$ March, President Iván Duque announced a 19-day nationwide quarantine, starting on $24^{th}$ March at midnight and ending on $12^{th}$ April at midnight \cite{Wiki2020colombiacovid}. As of August $28^{th}$, 2020, there were more than $590$ thousand confirmed cases (currently, the world's $7^{th}$ highest) and above 18 thousand confirmed deaths \cite{Worldometer2020}. As the outbreak of COVID-19 is expanding rapidly in Colombia, real-time analysis of epidemiological data are required to increase situational awareness and inform interventions. Mathematical modeling based on dynamic equations \cite{pang2020transmission,tang2020updated,frank2020covid} may provide detailed mechanism for the disease dynamics. A few studies were based on the Colombia COVID-19 situation  \cite{rojas2020mathematical,bizet2020time, teheran2020epidemiological, manrique2020sir}. These studies have broadly suggested that control measures could reduce the burden of COVID-19. However, none of the studies has considered awareness as a control utilizing recent epidemic data from the Colombia.  

The main objectives of this study are to (i) propose and analyze a compartmental model incorporating prosocial awareness, (ii) use available current COVID-19 epidemic  
from Colombia and calibrate the proposed model and (iii) compare prosocial awareness with other popular control measures in Colombia. 

Rest of the paper is organized as follows: A mathematical model which incorporates the prosocial awareness is described in Section \ref{Model_formulation}. The equilibrium points of the model and their stability along with related conditions are presented in Section \ref{Mathematical_analysis}. In Section \ref{Numerical_analysis}, the transcritical bifurcation phenomenon is presented between multiple equilibria. Next in \ref{Case_study_colombia}, we fit the proposed model to daily new COVID-19 cases and deaths from Colombia. The impact of prosocial awareness and comparison with other control strategies is also studied. Finally in Section \ref{Discussion}, we discuss the results from our study.

\section{Model formulation}{\label{Model_formulation}}
A compartmental differential equation model for COVID-19 is formulated and analyzed. We adopt a variant that reflects some key epidemiological properties of COVID-19. The model monitors the dynamics of six sub-populations, namely unaware susceptible $(S_u(t))$, aware susceptible $(S_a(t))$, exposed $(E(t))$, un-notified infected $(I(t))$,  notified infected $(J(t))$ and recovered $(R(t))$ individuals. The total population size is $N(t)= S_u(t) + S_a(t) + E(t) + I(t)+J(t)+ R(t)$. Our model incorporates some demographic effects by assuming a proportional natural death rate $\mu>0$ in each of the six sub-populations of the model. In addition, our model includes a net inflow of susceptible individuals into the region at a rate $\Pi$ per unit time. This parameter includes new births, immigration and emigration. Instead of constant awareness rate, we consider that the awareness will induce a behavioral response in the person and this person will transmit the knowledge to other hosts \cite{epstein2008coupled,just2018oscillations}. Thus, unaware suscptibles can become aware through contact with aware susceptibles. The functional response in this regard is assumed to be $\frac{\alpha S_u S_a}{N}$.

 The flow diagram of the proposed model is displayed in Fig. \ref{Fig:flow_chart}.

\begin{figure}[ht]
	\centering
	\includegraphics[width=0.9\textwidth]{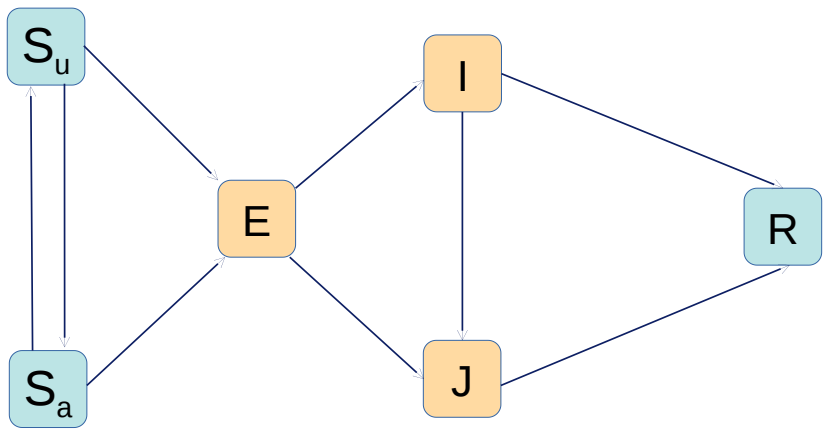}
	\caption{Compartmental flow diagram of the proposed model.}
	\label{Fig:flow_chart}
\end{figure}

Both susceptible populations decrease due to infection through successful contact with infectives who may be notified or un-notified. Note that un-notified class contains both asymptomatic and symptomatic infected individuals. We assume that the transmission co-efficient is less for notified individuals as they are kept in special observations. Thus, a reduced risk for notified COVID-19 patients is modelled as $\frac{\beta \nu J}{N}$ \cite{martcheva2015introduction}. On the other hand for un-notified individuals, the standard mixing force of infection is formulated as $\frac{\beta I}{N}$ \cite{may1991infectious}. By making successful contact with infectives both susceptible group members become exposed to the disease. The exposed population may become notified or un-notified at a rates $p \gamma$ and $(1-p) \gamma$ respectively. The recovery rate of un-notified and notified infected individuals are $\sigma_1$ and $\sigma_2$ respectively. The un-notified COVID-19 patients become notified at a rate $\eta$. The mortality rates related to COVID-19 are assumed to be $\delta_1$ and $\delta_2$ for un-notified and notified persons respectively. From the above considerations, the following system of ordinary differential equations governs the dynamics of the system:

\begin{eqnarray}\label{EQ:eqn 2.1}
\displaystyle{\frac{dS_u}{dt}} &=& \Pi- \beta \frac{I + \nu J}{N}  S_u - \alpha \frac{S_a}{N} S_u - \mu S_u +\theta S_a,\nonumber \\
\displaystyle{\frac{dS_a}{dt}} &=& \alpha \frac{S_a}{N} S_u - \epsilon \beta \frac{I + \nu J}{N} S_a - (\mu + \theta) S_a,\nonumber \\
\displaystyle{\frac{dE}{dt}} &=& \beta \frac{I + \nu J}{N} (S_u + \epsilon S_a)-(\gamma+\mu)E,\nonumber \\
\displaystyle{\frac{dI}{dt}} &=& (1-p) \gamma E - (\eta+\sigma_1+\mu + \delta_1)I, \nonumber \\
\displaystyle{\frac{dJ}{dt}} &=& p \gamma E + \eta I - (\sigma_2+\mu + \delta_2)J, \nonumber\\
\displaystyle{\frac{dR}{dt}} &=& \sigma_1 I+\sigma_2 J-\mu R, \nonumber
\end{eqnarray}
All the parameters and their biological interpretation are given in Table \ref{tab:mod1} respectively.\\
	
\begin{table}[ht]
	\begin{center}
		\caption{\textbf{Description of parameters used in the model.}} \vspace{3mm}
		\label{tab:mod1}
		\begin{tabular}{c p{7.2cm} p{2.5cm} p{2cm}}
			\hline
			\textbf{Parameters} & \textbf{Interpretation} &  \textbf{Value} & \textbf{Reference} \\ \hline
			$\Pi$ & Recruitment rate & -- & -- \\
			$\beta$ & Transmission rate & (0-1) & Estimated \\
			$\nu$ & Modification factor & 0.5 & Assumed \\
			$\alpha$ & Learning factor related to aware susceptibles & (0-1) & Estimated \\
			$\theta$ & Rate of transfer of aware individuals to unaware susceptible class & 0.02 & \cite{samanta2013effect} \\
			$\epsilon$ & Reduction in transmission co-efficient for aware susceptibles & 0.4 & Assumed\\
			$\frac{1}{\gamma}$ & Incubation period & 5 & \cite{li2020early,linton2020incubation} \\
			$p$ & Proportion of notified individuals & 0.2 & \cite{wu2020characteristics,yang2020epidemiological} \\
			$\eta$ & Transfer rate from un-notified to notified & 0.01 & Assumed\\
			$\sigma_1$ & Recovery rate from un-notified individuals & 0.17 & \cite{woelfel2020clinical,tindale2020transmission}\\
			$\sigma_2$ & Recovery rate from notified individuals & 0.072 & \cite{lopez2020end} \\
			$\delta_1$ & Disease induced mortality rate in the un-notified class & 0.01 & Assumed\\
			$\delta_2$ & Disease induced mortality rate in the notified class & (0-1) & Estimated\\
			$\mu$ & Natural death rate  & -- & -- \\
			$N$ & Total population  & -- & --\\
			\hline
		\end{tabular}
	\end{center}
\end{table}

\section{Model analysis}{\label{Mathematical_analysis}}
\subsection{Positivity and boundedness of the solution}
This subsection is provided to prove the positivity and boundedness of solutions of the system \eqref{EQ:eqn 2.1} with following initial conditions 

\begin{eqnarray}
(S_u(0),S_a(0),E(0),I(0),J(0),R(0))^T\in \mathbb{R}_{+}^6.
\label{EQ:eqn 2.2}
\end{eqnarray}

\begin{proposition}
	The system \eqref{EQ:eqn 2.1} is invariant in $\mathbb{R}_{+}^6$.
\end{proposition}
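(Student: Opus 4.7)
The plan is to apply the standard tangent-cone (Nagumo-type) criterion for invariance of the closed non-negative orthant: if the right-hand side of (2.1) is locally Lipschitz and, on each coordinate face $\{x_i=0\}$, the corresponding velocity $\dot x_i$ is non-negative when the remaining coordinates lie in $\mathbb{R}_{+}^6$, then $\mathbb{R}_{+}^6$ is positively invariant. Equivalently, each equation can be cast in the form $\dot x_i = A_i(t,x) - B_i(t,x)\,x_i$ with $A_i \geq 0$ along non-negative trajectories and with $B_i$ bounded on bounded sets, after which the integrating-factor estimate $x_i(t) \geq x_i(0)\exp\bigl(-\int_0^t B_i(s)\,ds\bigr) \geq 0$ yields the conclusion directly.

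Before executing the sign check I would pin down local existence and uniqueness. Because $N$ appears in denominators, the vector field is $C^1$ only on the open region $\{N>0\}$; Picard-Lindel\"of there furnishes a unique solution on a maximal interval from any initial datum with $N(0)>0$. Summing the six equations yields $\dot N = \Pi - \mu N - \delta_1 I - \delta_2 J$, so (once positivity of $I,J$ is in hand) $\dot N \geq -(\mu+\delta_1+\delta_2)N$, which together with $N(0)>0$ keeps $N$ strictly positive along the trajectory and prevents the vector field from blowing up.

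The core computation is the sign of each velocity on its own zero-face while the remaining coordinates are non-negative: $\dot S_u|_{S_u=0} = \Pi + \theta S_a \geq 0$, $\dot S_a|_{S_a=0} = 0$, $\dot E|_{E=0} = \beta(I+\nu J)(S_u+\epsilon S_a)/N \geq 0$, $\dot I|_{I=0} = (1-p)\gamma E \geq 0$, $\dot J|_{J=0} = p\gamma E + \eta I \geq 0$, and $\dot R|_{R=0} = \sigma_1 I + \sigma_2 J \geq 0$. Each expression is a sum of products of non-negative terms under the stated sign assumptions on the other variables, so no trajectory can cross any coordinate hyperplane toward the exterior.

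The only subtle point, and what I would flag as the main obstacle, is avoiding circularity between "$N>0$" (required for $\dot x_i$ to be well defined) and "$x_i \geq 0$" (required to keep $N>0$). I would resolve this with a standard first-exit argument: let $t^{*}$ be the infimum of times at which some component becomes negative; continuity forces $x_i(t^{*})=0$ and $x_j(t^{*}) \geq 0$ for all $j$, and the differential inequality for $N$ combined with $N(0)>0$ guarantees $N(t^{*})>0$, whereupon the boundary-sign computations above contradict the existence of such a $t^{*}$. This closes the loop and gives invariance of $\mathbb{R}_{+}^6$; global existence in forward time is then deferred to the boundedness result that follows.
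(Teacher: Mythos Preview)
Your proposal is correct and follows essentially the same approach as the paper: both verify that each velocity $\dot x_i$ is non-negative on the coordinate face $\{x_i=0\}$ while the remaining variables are non-negative, and conclude invariance of $\mathbb{R}_{+}^6$ from this tangent-cone check. Your version is in fact more careful than the paper's---the paper omits the $\eta I$ term in $\dot J|_{J=0}$ and does not address the $N>0$ denominator issue or the first-exit circularity you resolve---but the underlying argument is the same.
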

\begin{proof}
	Consider initial conditions \eqref{EQ:eqn 2.2} and let $S_u$ becomes zero at time $t_1$ before other state variables become zero, then	
	\begin{align*}
	\frac{dS_u}{dt}|_{S_u=0}&=\Pi + \theta S_a \geq 0,
	\end{align*}
	at $t_1$.
	This shows that $S_u$ is a non-decreasing function of time at $t_1$. Hence it follows that $S_u$ stays non-negative (similar argument is employed in Theorem 3.1 of Sun et al. \cite{sun2011effect}).
	For other state variables we note that
	\begin{align*}
    \frac{dS_a}{dt}|_{S_a=0}&= 0 \geq 0,\\ 
	\frac{dE}{dt}|_{E=0}&=\beta \left[\frac{(I+\nu J)}{S_u+S_a+I+J+R}\right] (S_u + \epsilon S_a)\geq 0,\\ 
	\frac{dI}{dt}|_{I=0}&= (1-p) \gamma E\geq 0,\\ 
	\frac{dJ}{dt}|_{J=0}&=p \gamma E\geq 0,\\
	\frac{dR}{dt}|_{R=0}&=\sigma_1 I + \sigma_2 J\geq 0.
	\end{align*}
	Thus we obtain non-negativity of all the six state variables and it follows that $\mathbb{R}_{+}^6$ is an invariant set for the model \eqref{EQ:eqn 2.1}.
\end{proof}

\begin{proposition}
	The system \eqref{EQ:eqn 2.1} is bounded in the region\\ $\Omega=\lbrace(S_u,S_a,E,I,J,R)\in \mathbb{R}_+^{6}|S_u+S_a+E+I+J+R\leq \frac{\Pi}{\mu}\rbrace$
\end{proposition}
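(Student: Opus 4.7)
The natural strategy is to track the total population $N(t)=S_u+S_a+E+I+J+R$ and show that it is eventually trapped inside $[0,\Pi/\mu]$. Adding the six equations of \eqref{EQ:eqn 2.1} causes all the terms involving $\beta$, $\alpha$, $\theta$, $\gamma$, $\eta$, $\sigma_1$, $\sigma_2$ to cancel (they only move mass between compartments), so one is left with
\[
\frac{dN}{dt}=\Pi-\mu N-\delta_1 I-\delta_2 J.
\]
Since Proposition~1 (already established above) guarantees $I(t),J(t)\ge 0$, this gives the differential inequality $\dot N\le \Pi-\mu N$.

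Next I would apply a standard comparison / integrating-factor argument: multiplying by $e^{\mu t}$ and integrating from $0$ to $t$ yields
\[
N(t)\le \frac{\Pi}{\mu}+\Bigl(N(0)-\frac{\Pi}{\mu}\Bigr)e^{-\mu t},
\]
so $\limsup_{t\to\infty} N(t)\le \Pi/\mu$, and in particular if $N(0)\le \Pi/\mu$ then $N(t)\le \Pi/\mu$ for every $t\ge 0$. Combined with the non-negativity of each coordinate from Proposition~1, this shows that $\Omega$ is positively invariant and is a global attractor for all non-negative initial data, which is exactly the boundedness assertion.

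There is essentially no obstacle here beyond bookkeeping: the only thing to be careful about is making sure that every interaction term in the six equations really cancels when we sum (the $\theta S_a$, $\alpha S_u S_a/N$, incidence, progression, recovery and notification terms all appear with opposite signs in two compartments), and that one does not discard the non-positive loss terms $-\delta_1 I-\delta_2 J$ in a way that reverses the inequality. Once the inequality $\dot N\le \Pi-\mu N$ is in place, Gronwall's lemma (or direct integration) finishes the argument, and the invariance of $\Omega$ follows because on the boundary surface $N=\Pi/\mu$ we have $\dot N\le -\delta_1 I-\delta_2 J\le 0$, so trajectories cannot leave $\Omega$ through that face, while the coordinate hyperplanes are handled by Proposition~1.
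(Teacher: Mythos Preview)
Your proposal is correct and follows essentially the same route as the paper: sum the six equations to obtain $\dot N=\Pi-\mu N-\delta_1 I-\delta_2 J\le \Pi-\mu N$, then use a comparison/integrating-factor argument to get $N(t)\le N(0)e^{-\mu t}+\tfrac{\Pi}{\mu}(1-e^{-\mu t})$, concluding positive invariance of $\Omega$ and that $\Omega$ attracts all non-negative initial data. Your additional remarks about the cancellation bookkeeping and the behaviour on the face $N=\Pi/\mu$ are slightly more explicit than the paper's version but do not change the argument.
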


\begin{proof}
	Adding all the equations of the model \eqref{EQ:eqn 2.1}, total human populations satisfy the following equations,
	\begin{align*}
	&\frac{dN}{dt}=\Pi-\mu N-\delta_1 I - \delta_2 J \leq \Pi-\mu N\\
	\end{align*}
	Since $\frac{dN}{dt}\leq \Pi-\mu N$, it follows that $\frac{dN}{dt}\leq 0$ if $N \geq \frac{\Pi}{\mu}$. 
	Thus, by using standard comparison theorem, it can be shown that $N \leq N(0) e^{-\mu t}+\frac{\Pi}{\mu}(1-e^{-\mu t})$. 
	In particular, $N (t) \leq \frac{\Pi}{\mu}$ if $N ( 0 ) \leq \frac{\Pi}{\mu}$. 
	Thus, the region $\Omega$ is positively-invariant. Further, if $N ( 0 ) > \frac{\Pi}{\mu}$, then either the solution enters $\Omega$ in finite time, or $N(t)$ approaches $\frac{\Pi}{\mu}$ asymptotically.
	Hence, the region $\Omega$ attracts all solutions in $\mathbb{R}_+^{6}$ .
\end{proof}

\subsection{Equilibrium points, threshold quantities and stability analysis}
The system \eqref{EQ:eqn 2.1} has four type of equilibrium points: awareness-free disease-free equilibrium (AFDFE), disease free equilibrium (DFE), awareness-free endemic equilibrium (AFEE) and endemic equilibrium (EE). The awareness free, DFE is given by $E_0=(\frac{\Pi}{\mu},0,0,0,0,0)$. 

\begin{lemma}
	The awareness free, DFE $E_0$ of system \eqref{EQ:eqn 2.1} is locally asymptotically stable whenever $max\left[R_1,\frac{\alpha}{\mu + \theta}\right]<1$ and unstable otherwise, where $$R_1=\frac{\beta \gamma}{(\mu + \gamma)(\sigma_2 + \mu + \delta_2)} \left[(1-p)\frac{\eta \nu + \sigma_2 + \mu + \delta_2}{\eta + \sigma_1 + \mu + \delta_1} + \nu p \right]$$.
\end{lemma}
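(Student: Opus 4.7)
The plan is to linearize the system \eqref{EQ:eqn 2.1} at $E_0=(\Pi/\mu,0,0,0,0,0)$ and exploit the block-triangular structure of the resulting Jacobian. At $E_0$ we have $N=\Pi/\mu$, $S_u/N=1$, and $S_a=E=I=J=R=0$, so all bilinear incidence terms vanish to first order except through the derivatives of $I$, $J$, and $S_a$. A direct computation yields
\[
J(E_0)=\begin{pmatrix}
-\mu & \theta-\alpha & 0 & -\beta & -\beta\nu & 0\\
0 & \alpha-(\mu+\theta) & 0 & 0 & 0 & 0\\
0 & 0 & -(\gamma+\mu) & \beta & \beta\nu & 0\\
0 & 0 & (1-p)\gamma & -(\eta+\sigma_1+\mu+\delta_1) & 0 & 0\\
0 & 0 & p\gamma & \eta & -(\sigma_2+\mu+\delta_2) & 0\\
0 & 0 & 0 & \sigma_1 & \sigma_2 & -\mu
\end{pmatrix}.
\]
The first, second, and sixth columns are essentially decoupled, so two eigenvalues are $-\mu$ and a third is $\alpha-(\mu+\theta)$. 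The latter is negative precisely when $\alpha/(\mu+\theta)<1$, which accounts for one half of the hypothesis.

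For the remaining eigenvalues I would restrict attention to the $3\times 3$ block $M$ acting on $(E,I,J)$. The cleanest route is to invoke the next-generation matrix method of van den Driessche and Watmough: split the linearized infection subsystem as $\dot x=(F'-V')x$ with
\[
F'=\begin{pmatrix} 0 & \beta & \beta\nu\\ 0 & 0 & 0\\ 0 & 0 & 0\end{pmatrix},\qquad
V'=\begin{pmatrix} \gamma+\mu & 0 & 0\\ -(1-p)\gamma & \eta+\sigma_1+\mu+\delta_1 & 0\\ -p\gamma & -\eta & \sigma_2+\mu+\delta_2\end{pmatrix}.
\]
Since $V'$ is lower triangular with positive diagonal, inverting it is routine; carrying out $F'V'^{-1}$ and reading off its (only) nonzero eigenvalue reproduces exactly
\[
R_1=\frac{\beta\gamma}{(\mu+\gamma)(\sigma_2+\mu+\delta_2)}\!\left[(1-p)\frac{\eta\nu+\sigma_2+\mu+\delta_2}{\eta+\sigma_1+\mu+\delta_1}+\nu p\right].
\]
By Theorem 2 of van den Driessche--Watmough, all eigenvalues of $M=F'-V'$ have negative real part iff $\rho(F'V'^{-1})=R_1<1$, which is the second half of the hypothesis.

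Combining the two parts, the full Jacobian $J(E_0)$ is Hurwitz under $\max\{R_1,\alpha/(\mu+\theta)\}<1$ and has at least one eigenvalue with positive real part if either quantity exceeds $1$, so $E_0$ is locally asymptotically stable in the first case and unstable otherwise. The only step requiring real attention is the algebraic simplification when forming $F'V'^{-1}$: matching the grouping $(1-p)(\eta\nu+\sigma_2+\mu+\delta_2)/(\eta+\sigma_1+\mu+\delta_1)+\nu p$ requires tracking how the route $E\to I\to J$ and the direct route $E\to J$ each contribute, and this bookkeeping, rather than any deep difficulty, is where errors tend to creep in. Alternatively, one may bypass the next-generation formalism and apply the Routh--Hurwitz criterion directly to the characteristic polynomial of $M$, showing that positivity of its coefficients and of the relevant Hurwitz determinants is equivalent to $R_1<1$; this is more computational but self-contained.
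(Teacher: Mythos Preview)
Your proposal is correct and follows the same overall architecture as the paper: compute the Jacobian at $E_0$, read off the eigenvalues $-\mu$ (twice) and $\alpha-(\mu+\theta)$ from the block-triangular structure, and then analyze the $3\times 3$ infection block in $(E,I,J)$. The one substantive difference is in that last step. The paper expands the cubic characteristic polynomial of the $3\times 3$ block and asserts that the Routh--Hurwitz conditions hold iff $R_1<1$ (with the details of $a_1a_2>a_3$ left to the reader), whereas you invoke Theorem~2 of van den Driessche--Watmough on the $F'$--$V'$ splitting to get the same conclusion directly. Your route is cleaner and sidesteps the coefficient algebra the paper glosses over; the paper's route is self-contained but requires verifying the Hurwitz determinant inequality, which is not entirely trivial. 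Since you also sketch the Routh--Hurwitz alternative at the end, your write-up effectively subsumes the paper's argument. (Minor note: your entry $-\beta\nu$ in position $(1,5)$ of $J(E_0)$ is correct; the paper's $-\beta$ there is a typo, but it does not affect the spectrum because of the block structure.)
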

\begin{proof}
		We calculate the Jacobian of the system \eqref{EQ:eqn 2.1} at $E_0$, which is given by
	\begin{align*}
	J_{E_{0}}={\begin{pmatrix}
		-\mu & -\alpha + \theta & 0 &  -\beta & -\beta & 0 \\
		0 & \alpha-(\mu+\theta) & 0 & 0 & 0 & 0 \\
		0 & 0 & -(\mu+\gamma) & \beta & \nu \beta & 0 \\
		0 & 0 & (1-p)\gamma & -(\eta+\sigma_1+\mu+\delta_1) & 0 & 0\\
		0 & 0 & p \gamma & \eta & -(\sigma_2+\mu+\delta_2) & 0 \\
		0 & 0 & 0 & \sigma_1 & \sigma_2 & -\mu \\
		\end{pmatrix}},
	\end{align*}
		
	Let $\lambda$ be the eigenvalue of the matrix $J_{E_{0}}$. Then the characteristic equation is given by $det(J_{E_{0}}-\lambda I)=0$.\\
	
	Clearly, $-\mu$, $-\mu$ and $\alpha - (\mu+\theta)$ are three eigenvalues of the Jacobian matrix $J_{E_0}$. The other three eigenvalues are given the following cubic equation
	\begin{eqnarray}
	\begin{array}{lll}
	f(\lambda):=\lambda^3+a_1 \lambda^{2}+a_2 \lambda+a_3=0\\
	\end{array}
	\label{EQ:eqn 2.4}
	\end{eqnarray}
	where,\\
	\begin{eqnarray}
	a_1&=& \mu + \gamma + m_1 +m_2\nonumber\\
	a_2&=& (\mu + \gamma) (m_1 +m_2) - \beta \gamma \nonumber\\
	a_3&=& m_1 m_2 (\mu+\gamma) (1-R_1).\nonumber
	\end{eqnarray}
	Here \\
	$m_1=\eta+\sigma_1+\mu+\delta_1$,\\
	$m_2=\sigma_2+\mu+\delta_2$ and \\
	$R_1=\frac{\beta \gamma}{m_2 (\mu + \gamma)} \left[(1-p)\frac{\eta \nu + m_2}{m_1} + p \nu \right]$.\\
	
It is straight forward to show that coefficients of \eqref{EQ:eqn 2.4} satisfies Routh-Hurwitz criterion if $R_1<1$. Thus, all the eigenvalues are negative or have negative real parts if in addition $\frac{\alpha}{\mu + \theta}<1$.

On the other hand, if $max\left[R_1,\frac{\alpha}{\mu + \theta}\right]>1$ then at least one eigenvalue of the Jacobian matrix is positive and $E_0$ become unstable. Hence the proof is complete.
\end{proof}

\begin{theorem}
	The awareness free DFE $E_0$ is globally asymptotically stable for the system \eqref{EQ:eqn 2.1} if $max\left[R_1,\frac{\alpha}{\mu + \theta}\right]<1$.
\end{theorem}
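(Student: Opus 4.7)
The plan is to build a single scalar Lyapunov function on the positively invariant region $\Omega$ and then invoke the LaSalle invariance principle, exactly as the abstract advertises. The natural candidate couples a linear term in the aware-susceptible variable (to exploit the hypothesis $\alpha/(\mu+\theta)<1$) with a weighted linear combination of the three ``infected'' compartments $E,I,J$ whose weights are dictated by the next-generation construction that produced $R_1$. Concretely I will try
\[
L \;=\; S_a \;+\; E \;+\; \frac{\beta(m_2+\nu\eta)}{m_1 m_2}\, I \;+\; \frac{\beta\nu}{m_2}\, J,
\]
with $m_1=\eta+\sigma_1+\mu+\delta_1$ and $m_2=\sigma_2+\mu+\delta_2$ as in the previous lemma. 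This is essentially $w^{T}V^{-1}(E,I,J)^{T}+S_a$, where $w$ is the left Perron eigenvector of the new-infection matrix $F$.

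Next I would differentiate $L$ along trajectories of \eqref{EQ:eqn 2.1}. The coefficients above are chosen precisely so that the contributions of $I$ and $J$ in $\dot L$ cancel, and in $\Omega$ one has the bounds $S_u/N\le 1$, $S_a/N\le 1$ and $(S_u+\epsilon S_a)/N\le 1$, so all the terms coming from the quotients $\cdot/N$ can be estimated from above by their numerators. A direct algebraic manipulation should reduce everything to
\[
\dot L \;\le\; (\gamma+\mu)\,(R_1-1)\,E \;+\; \bigl(\alpha-(\mu+\theta)\bigr)\,S_a \;-\; \frac{\epsilon\beta (I+\nu J)}{N}\,S_a,
\]
with the last term being an automatic non-positive bonus. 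Under the two hypotheses $R_1<1$ and $\alpha<\mu+\theta$, both leading coefficients are strictly negative, hence $\dot L\le 0$ throughout $\Omega$, and $\dot L=0$ only on the set $\mathcal{M}=\{S_a=0,\,E=0\}$.

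Finally I would identify the largest invariant subset of $\mathcal M$. On $\mathcal M$, the equation for $\dot E$ reduces to $\beta (I+\nu J)S_u/N=0$; since $S_u\to \Pi/\mu>0$, this forces $I=J=0$ on any invariant subset. The $R$ equation then collapses to $\dot R=-\mu R$ giving $R\to 0$, and the $S_u$ equation to $\dot S_u=\Pi-\mu S_u$ giving $S_u\to \Pi/\mu$, so the maximal invariant set in $\mathcal M$ is the singleton $\{E_0\}$. LaSalle's invariance principle (together with the earlier boundedness result) then yields global asymptotic stability of $E_0$ in $\Omega$. The main obstacle I anticipate is the bookkeeping in the second step: one has to verify that the $I$-coefficient and the $J$-coefficient in $\dot L$ really vanish with the chosen weights, that the $E$-coefficient assembles exactly into $(\gamma+\mu)(R_1-1)$, and that every cross term of the form $\beta(I+\nu J)(\text{susceptible})/N$ is controlled by replacing $(\text{susceptible})/N$ with $1$ without sign loss; the $S_a$-component is included precisely to absorb the $\epsilon\beta(I+\nu J)S_a/N$ piece that $\dot{S_a}$ contributes and that would otherwise not be of a single sign.
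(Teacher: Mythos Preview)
Your proposal is correct and the algebra goes through: with $c_1=\beta(m_2+\nu\eta)/(m_1m_2)$ and $c_2=\beta\nu/m_2$ one indeed gets $\dot L=(\gamma+\mu)(R_1-1)E+S_a\bigl(\alpha S_u/N-(\mu+\theta)\bigr)+\beta(I+\nu J)(S_u/N-1)$ exactly, and both hypotheses then force $\dot L\le0$ with the LaSalle argument finishing as you outline. One cosmetic point: the $-\epsilon\beta(I+\nu J)S_a/N$ term from $\dot S_a$ cancels \emph{exactly} against the $\epsilon S_a$ part of $\dot E$, so the surviving non-positive ``bonus'' term is $\beta(I+\nu J)(S_u/N-1)$ rather than the $\epsilon$-term; your stated inequality is still valid (it arises if you bound $(S_u+\epsilon S_a)/N\le1$ first), but your closing remark about the sign of the $\epsilon$-piece is off---that term is already non-positive on its own, and the genuine reason to include $S_a$ in $L$ is to generate the $(\alpha-(\mu+\theta))S_a$ term that drives $S_a\to0$.

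Your route is genuinely different from the paper's. The paper does not build a Lyapunov function directly; instead it invokes the Castillo--Chavez et~al.\ decomposition, splitting the state into uninfected $X=(S_u,S_a,R)$ and infected $V=(E,I,J)$ and verifying (i) that $X^*=(\Pi/\mu,0,0)$ is GAS for the disease-free subsystem $\dot X=F_1(X,0)$, and (ii) that $G_1(X,V)=BV-\widehat G_1(X,V)$ with $\widehat G_1\ge0$ and $B$ a stable Metzler matrix. Your approach is more self-contained and makes the role of each hypothesis visible in a single inequality; the paper's is shorter on the page but delegates the work to the cited theorem---and in fact its verification of condition~(i) is cursory (the displayed $F_1(X,0)$ omits the $S_a$-dynamics, so the GAS of $X^*$, which is precisely where $\alpha<\mu+\theta$ enters, is asserted rather than proved).
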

\begin{proof}
	The system \eqref{EQ:eqn 2.1} can be represented as
	\begin{align*}
	\frac{dX}{dt}&=F_1(X,V)\\
	\frac{dV}{dt}&=G_1(X,V), G_1(X,0)=0
	\end{align*}	
	where $X=(S_u, S_a, R)\in R_3^+$ (uninfected classes of people),
	$V=(E, I, J)\in R_3^+$ (infected classes of people), and $E{0}=(\frac{\Pi}{\mu},0,0,0,0)$ is the awareness free, DFE of the system \eqref{EQ:eqn 2.1}. The global stability of $E_0$ is guaranteed if the following two conditions are satisfied:
	
	\begin{enumerate}
		\item For $\frac{dX}{dt}=F_1(X,0)$, $X^*$ is globally asymptotically stable,
		\item $G_1(X,V) = BV-\widehat{G}_1(X,V),$ $\widehat{G}_1(X,V)\geq 0$ for $(X,V)\in \hat{\Omega}$,
	\end{enumerate}
	where $B=D_VG_1(X^*,0)$ is a Metzler matrix and $\hat{\Omega}$ is the positively invariant set with respect to the model \eqref{EQ:eqn 2.1}. Following Castillo-Chavez et al \cite{castillo2002computation}, we check for aforementioned conditions.\\
	For system \eqref{EQ:eqn 2.1},
	\begin{align*}
	F_1(X,0)&=\begin{pmatrix}
	\Pi -\mu S_u\\
	0\\
	0
	\end{pmatrix},\\
	B&=\begin{pmatrix}
	-(\gamma+\mu) & \beta & \nu \beta \\
	(1-p) \gamma & -m_1 & 0\\
	p \gamma & \eta & -m_2
	\end{pmatrix}\\
	\end{align*}
	and 
	\begin{align*}
	\widehat{G}_1(X,V)=\begin{pmatrix}
	\beta (I+\nu J) (1-\frac{S_u}{N})\\
	0\\
	0
	\end{pmatrix}.
	\end{align*}
	
	Clearly, $\widehat{G}_1(X,V)\geq 0$ whenever the state variables are inside $\Omega$. Also it is clear that $X^*=(\frac{\Pi}{\mu},0,0)$ is a globally asymptotically stable equilibrium of the system $\frac{dX}{dt}=F_1(X,0)$. Hence, the theorem follows.
\end{proof}

The unique disease-free equilibrium of the system \eqref{EQ:eqn 2.1} is given by
\begin{align*}
E_1=\left(\frac{\Pi(\mu+\theta)}{\mu \alpha},\frac{\Pi [\alpha-(\mu+\theta)]}{\mu \alpha},0,0,0,0\right),
\end{align*} 
which exists if $\frac{\alpha}{\mu+\theta}>1$.
To obtain the basic reproduction number $R_0$ of the system \eqref{EQ:eqn 2.1}, we apply the next generation matrix approach. The infected compartments of the model \eqref{EQ:eqn 2.1} consist of $( E(t), I(t), J(t))$ classes. Following the next generation matrix method, the matrix $F$ of the trransmission terms and the matrix, $V$ of the transition terms calculated at $E_1$ are,

\begin{align*}
F&=\begin{pmatrix}
0 & \beta m_3 & \nu \beta m_3 \\
0 & 0 & 0  \\
0 & 0 & 0  \\
\end{pmatrix},\\\\
V&=\begin{pmatrix}
\gamma+\mu & 0 & 0 \\
-(1-p)\gamma & m_1 & 0 \\
- p \gamma & -\eta & m_2 \\
\end{pmatrix},
\end{align*}

where,
\begin{align*}
m_1 &= \eta + \sigma_1 + \mu + \delta_1\\
m_2 &= \sigma_2 + \mu + \delta_2 \\
m_3 & = \frac{1}{\alpha} \left[ (\theta + \mu) + \epsilon \{\alpha - (\theta + \mu)\} \right]
\end{align*}

Calculating the dominant eigenvalue of the next generation matrix $FV^{-1}$ , we obtain the basic reproductive number as follows \cite{van2002reproduction}
\begin{align}\label{EQ:eqn 3.2}
R_0=\frac{m_3 \beta \gamma}{m_2 (\mu + \gamma)} \left[(1-p)\frac{\eta \nu + m_2}{m_1} +  p \nu \right]
\end{align}\\
The basic reproduction number $R_0$ is defined as the average number of secondary cases generated by one infected individual during their infectious period in a fully susceptible population. The basic reproduction number $R_0$ of \eqref{EQ:eqn 2.1} given in \ref{EQ:eqn 3.2}.

Using Theorem 2 in \cite{van2002reproduction}, the following result is established.
\begin{lemma}
	The disease-free equilibrium $\varepsilon_0$ of system \eqref{EQ:eqn 2.1} is locally asymptotically stable whenever $R_0<1$, and unstable whenever $R_0>1$.
\end{lemma}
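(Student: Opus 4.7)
The plan is to apply Theorem 2 of \cite{van2002reproduction} directly, using the $(F,V)$ decomposition already constructed at $E_1$ in the excerpt. That theorem reduces local stability of the DFE to the sign of $\rho(FV^{-1})-1$ provided the standard hypotheses (A1)--(A5) are met, and the spectral radius of $FV^{-1}$ has already been identified with $R_0$ in \eqref{EQ:eqn 3.2}. So once those hypotheses are verified, the lemma follows at once.

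First I would dispatch the bookkeeping assumptions (A1)--(A4). The infected compartments $(E,I,J)$ are cleanly separated from the uninfected ones $(S_u,S_a,R)$; $F\ge 0$ entrywise because $m_3>0$ under the existence condition $\alpha>\mu+\theta$; and the transition matrix $V$ is lower triangular with positive diagonal $(\gamma+\mu,\,m_1,\,m_2)$ and non-positive off-diagonal entries, so it is a nonsingular M-matrix with $V^{-1}\ge 0$. All of these are structural checks that follow immediately from the form of \eqref{EQ:eqn 2.1}.

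The substantive step, and the main obstacle, is hypothesis (A5): the disease-free subsystem obtained by freezing $E=I=J=0$ must be locally asymptotically stable at the uninfected projection of $E_1$, namely $(S_u^*,S_a^*,R^*)=\bigl(\tfrac{\Pi(\mu+\theta)}{\mu\alpha},\,\tfrac{\Pi[\alpha-(\mu+\theta)]}{\mu\alpha},\,0\bigr)$. Evaluating the full Jacobian at $E_1$ yields a block-triangular matrix, since every entry of the form $\partial\dot{X}/\partial Y$ with $X\in\{E,I,J\}$ and $Y\in\{S_u,S_a,R\}$ vanishes at the DFE (each such term carries a factor $I+\nu J=0$). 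Hence the spectrum splits into that of the infected block (controlled by $F-V$ and handled by the theorem) and that of the uninfected block in $(S_u,S_a,R)$. The $R$-row immediately contributes the eigenvalue $-\mu$, and for the remaining $2\times 2$ block in $(S_u,S_a)$ I would substitute $S_u^*/N^*=(\mu+\theta)/\alpha$ and $S_a^*/N^*=[\alpha-(\mu+\theta)]/\alpha$, taking care with the derivatives of the ratio $S_a S_u/N$. A short computation then gives trace $<0$ and determinant $=\mu[\alpha-(\mu+\theta)]>0$ under the existence condition $\alpha>\mu+\theta$, so both eigenvalues have negative real parts. This verifies (A5), and Theorem 2 of \cite{van2002reproduction} delivers the stated dichotomy: $E_1$ is locally asymptotically stable for $R_0<1$ and unstable for $R_0>1$.
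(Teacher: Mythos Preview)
Your proposal is correct and follows exactly the route the paper takes: the paper simply states ``Using Theorem 2 in \cite{van2002reproduction}, the following result is established'' with no further argument, so you are invoking the same theorem but actually supplying the verification of hypotheses (A1)--(A5) that the paper omits. In particular, your check of (A5) via the $2\times 2$ $(S_u,S_a)$ block (trace negative, determinant $\mu[\alpha-(\mu+\theta)]>0$ under the existence condition) is sound and fills in the only nontrivial gap left by the paper's one-line appeal.
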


\begin{remark}
	Note that the threshold quantities $R_1$ and $R_0$ are linearly dependent by the relation $R_0=m_3 R_1$.
\end{remark}

\begin{theorem}\label{EQ:eqn T.3.4}
	The DFE $E_1$ of the model \eqref{EQ:eqn 2.1}, is globally asymptotically stable in $\Omega$ whenever $max\left[ R_0, \frac{\beta \gamma}{(\sigma_1+\mu + \delta_1)(\mu + \gamma)},\frac{\beta \nu \gamma}{m_2(\mu + \gamma)}\right]<1$.	
\end{theorem}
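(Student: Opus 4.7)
I would follow the Castillo--Chavez et al.\ \cite{castillo2002computation} decomposition used in the proof of Theorem 3.3, but now based at $E_1=(S_u^\ast,S_a^\ast,0,0,0,0)$ with $S_u^\ast=\Pi(\mu+\theta)/(\mu\alpha)$ and $S_a^\ast=\Pi[\alpha-(\mu+\theta)]/(\mu\alpha)$. Split the state as $X=(S_u,S_a,R)$ (uninfected) and $V=(E,I,J)$ (infected), write the system as $\dot X=F_1(X,V)$, $\dot V=G_1(X,V)$ with $G_1(X,0)=0$, and then verify on $\Omega$ the two standard conditions: (a) global asymptotic stability of $X^\ast=(S_u^\ast,S_a^\ast,0)$ for $\dot X=F_1(X,0)$, and (b) a decomposition $G_1(X,V)=BV-\widehat G_1(X,V)$ with $\widehat G_1\ge 0$ on $\Omega$ and $B=D_VG_1(X^\ast,0)$ a stable Metzler matrix.

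Step (a) is straightforward: $\dot R=-\mu R$ decouples and sends $R\to 0$; on $\{R=0\}$ the total $N=S_u+S_a$ obeys $\dot N=\Pi-\mu N$, so $N\to \Pi/\mu$; rewriting the awareness equation in terms of the fraction $s=S_a/N$ yields, asymptotically, the scalar logistic $\dot s=s[\alpha(1-s)-(\mu+\theta)]$, whose nontrivial root $s^\ast=1-(\mu+\theta)/\alpha$ is globally attracting on $(0,1]$ exactly under the standing existence condition $\alpha>\mu+\theta$, so $X\to X^\ast$. For the Metzler block I compute $B$ using the identity $(S_u^\ast+\epsilon S_a^\ast)/N^\ast=m_3$; its characteristic polynomial has the same shape as the cubic derived in Lemma 3.1 but with $R_1$ replaced by $R_0=m_3 R_1$, so Routh--Hurwitz makes $B$ Hurwitz stable precisely when $R_0<1$.

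The real difficulty is step (b). A direct computation gives
\begin{equation*}
\widehat G_1(X,V)=\Bigl(\beta(I+\nu J)\bigl[m_3-\tfrac{S_u+\epsilon S_a}{N}\bigr],\,0,\,0\Bigr)^T,
\end{equation*}
and since $(S_u+\epsilon S_a)/N$ can exceed $m_3$ inside $\Omega$ (for example when $S_u$ dominates and $S_a$ is small), $\widehat G_1$ fails to be pointwise nonnegative against the natural choice of $B$. My plan to rescue the argument is to pass to an enlarged comparison matrix $\bar B$ obtained by replacing the coefficient $m_3\beta$ in the first row of $B$ by the cruder bound $\beta$, which exploits $(S_u+\epsilon S_a)/N\le 1$ throughout $\Omega$ and makes $\widehat G_1\ge 0$ automatic; the trade-off is that $\bar B$ must itself be Hurwitz. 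I expect the two extra threshold inequalities $\frac{\beta\gamma}{(\sigma_1+\mu+\delta_1)(\mu+\gamma)}<1$ and $\frac{\beta\nu\gamma}{m_2(\mu+\gamma)}<1$ to appear as the sign conditions Routh--Hurwitz imposes on the principal $2\times 2$ feedback minors corresponding to the $E\leftrightarrow I$ and $E\leftrightarrow J$ loops of $\bar B$, while $R_0<1$ controls the constant term of the cubic through the same $m_3$-weighted relation used for $B$. Once $\bar B$ is Hurwitz and $\widehat G_1\ge 0$, the comparison inequality $\dot V\le \bar B V$ forces $V\to 0$, and LaSalle's invariance principle on the largest invariant subset of $\{V=0\}\cap\Omega$ pins the $\omega$-limit set to $E_1$, completing the proof.
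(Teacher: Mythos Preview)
Your overall architecture is sound, and your handling of step~(a) is correct (and actually cleaner than the paper's, which at that stage invokes a second Volterra-type Lyapunov function following \cite{vargas2011global}). However, step~(b) has a genuine gap in exactly the place you flagged as delicate.

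Once you replace $m_3\beta$ by $\beta$ in the first row, the resulting matrix
\[
\bar B=\begin{pmatrix}-(\gamma+\mu)&\beta&\nu\beta\\(1-p)\gamma&-m_1&0\\p\gamma&\eta&-m_2\end{pmatrix}
\]
has constant term $a_3=m_1m_2(\mu+\gamma)(1-R_1)$, \emph{not} $(1-R_0)$: the factor $m_3$ has been scrubbed out by the crude bound, so $R_0<1$ no longer controls $a_3$. Your suggestion that the two extra thresholds arise as Routh--Hurwitz sign conditions on the $E\!\leftrightarrow\!I$ and $E\!\leftrightarrow\!J$ principal $2\times2$ minors is also off: those minors give $\frac{(1-p)\beta\gamma}{m_1(\mu+\gamma)}<1$ and $\frac{p\beta\nu\gamma}{m_2(\mu+\gamma)}<1$, which differ from the stated hypotheses (note the missing $(1-p)$, $p$ factors and that $m_1\neq\sigma_1+\mu+\delta_1$). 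In fact one can choose parameters with $\nu=1$, $m_2>\sigma_1+\mu+\delta_1$, and both stated thresholds just below~$1$ for which $R_1>1$, so $\bar B$ fails to be Hurwitz and the comparison $\dot V\le\bar B V$ does not force $V\to0$.

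The paper sidesteps this by \emph{not} asking $\bar B$ to be Hurwitz. Instead it builds a single linear Lyapunov function on the infected block,
\[
\mathcal{D}=\frac{E}{\gamma+\mu}+\frac{I}{\gamma}+\frac{J}{\gamma},
\]
whose derivative, after the same crude bound $S_u+\epsilon S_a\le N$, collapses to
\[
\dot{\mathcal{D}}\le \frac{\sigma_1+\mu+\delta_1}{\gamma}\,I\!\left[\frac{\beta\gamma}{(\sigma_1+\mu+\delta_1)(\mu+\gamma)}-1\right]+\frac{m_2}{\gamma}\,J\!\left[\frac{\beta\nu\gamma}{m_2(\mu+\gamma)}-1\right],
\]
because the coefficients are chosen so that the $E$-terms cancel and the $\eta I$ inflow into $J$ is absorbed by $m_1-\eta=\sigma_1+\mu+\delta_1$. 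This is precisely where the two extra thresholds come from, and LaSalle then gives $(E,I,J)\to0$ without ever needing $R_1<1$. If you want to salvage your matrix route, the fix is to use the row vector $c=\bigl(\tfrac{1}{\gamma+\mu},\tfrac{1}{\gamma},\tfrac{1}{\gamma}\bigr)$ and verify $c\bar B\le 0$ componentwise under the stated hypotheses, which is equivalent to the paper's computation; insisting on full Hurwitz stability of $\bar B$ is strictly stronger than what the theorem assumes.
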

\begin{proof}
	Consider the following Lyapunov function
	\begin{align*}
	\mathcal{D}=\frac{E}{\gamma + \mu}+\frac{I}{\gamma}+\frac{J}{\gamma}
	\end{align*}
	
	We take the Lyapunov derivative with respect to $t$, 
	\begin{align*}
	\dot{\mathcal{D}}&=\frac{\dot{E}}{\gamma + \mu}+\frac{\dot{I}}{\gamma}+\frac{\dot{J}}{\gamma}\\
	&=\frac{1}{\mu+\gamma}\left[ \beta \frac{I + \nu J}{N} (S_u + \epsilon S_a)\right] - \frac{(m_1 - \eta) I}{\gamma} - \frac{m_2 J}{\gamma}\\
	&\leq \frac{(\sigma_1+\mu + \delta_1) I}{\gamma} \left[ \frac{\beta \gamma}{(\sigma_1+\mu + \delta_1)(\mu + \gamma)} - 1\right] + \frac{m_2 J}{\gamma} \left[ \frac{\beta \nu \gamma}{m_2(\mu + \gamma)} - 1 \right]   \text{ (Since $S_u + \epsilon S_a \leq N$ in $\Omega$)}\\
	\end{align*}
	Thus, $\dot{\mathcal{D}}\leq 0$, whenever $max\left[\frac{\beta \gamma}{(\sigma_1+\mu + \delta_1)(\mu + \gamma)},\frac{\beta \nu \gamma}{m_2(\mu + \gamma)}\right]<1$.
	
	Since all the variables and parameters of the model \eqref{EQ:eqn 2.1} are non-negative, it follows that $\dot{\mathcal{D}}\leq 0$ with $\dot{\mathcal{D}}=0$ at DFE if $max\left[\frac{\beta \gamma}{(\sigma_1+\mu + \delta_1)(\mu + \gamma)},\frac{\beta \nu \gamma}{m_2(\mu + \gamma)}\right]<1$. Hence, $\mathcal{D}$ is a Lyapunov function on $\Omega$. Therefore, followed by LaSalle’s Invariance Principle \cite{lasalle1976stability}, that
	\begin{equation}\label{EQ:eqn 3.17}
	(E(t), I(t), J(t))\rightarrow (0,0,0) \text{ as } t \rightarrow \infty
	\end{equation}
	Since  $\lim\limits_{t\rightarrow \infty}sup I(t)=0 $ and $\lim\limits_{t\rightarrow \infty}sup J(t)=0 $ (from \ref{EQ:eqn 3.17}), it follows that, for sufficiently small $\xi_1>0, \xi_2>0$, there exist constants $L_1>0, L_2>0$ such that $\lim\limits_{t\rightarrow \infty}sup I(t)\leq \xi_1$ for all $t>L_1$ and $\lim\limits_{t\rightarrow \infty}sup J(t)\leq \xi_2$ for all $t>L_2$.\\
	Hence, it follows that,
	\begin{align*}
	\frac{dR}{dt}\leq \sigma_1 \xi_1 + \sigma_2 \xi_2 - \mu R
	\end{align*}
	Therefore using comparison theorem \cite{smith1995theory}
	\begin{align*}
	R^{\infty}=\lim\limits_{t\rightarrow \infty}sup R(t)\leq \frac{\sigma_1 \xi_1 + \sigma_2 \xi_2}{\mu}
	\end{align*}
	Therefore, as $(\xi_1,\xi_2) \rightarrow (0,0)$,  $R^{\infty}=\lim\limits_{t\rightarrow \infty}sup R(t)\leq0$\\
	Similarly by using $\lim\limits_{t\rightarrow \infty}inf I(t)=0$ and $\lim\limits_{t\rightarrow \infty}inf J(t)=0$, it can be shown that
	\begin{align*}
	R_{\infty}=\lim\limits_{t\rightarrow \infty}inf R(t)\geq 0
	\end{align*}\\
	Thus, it follows from above two relations
	\begin{align*}
	R_{\infty} \geq 0 \geq R^{\infty}
	\end{align*}
	Hence $\lim\limits_{t\rightarrow \infty} R(t)= 0$\\
	
	Substituting $E(t) = I(t) = J(t) = R(t) = 0$ in the original system \eqref{EQ:eqn 2.1}, we get

    \begin{eqnarray}\label{EQ:3.18}
    \displaystyle{\frac{dS_u}{dt}} &=& \Pi- \frac{\alpha S_u S_a}{S_u+S_a} - \mu S_u +\theta S_a,\nonumber \\
    \displaystyle{\frac{dS_a}{dt}} &=& \frac{\alpha S_u S_a}{S_u+S_a} - (\mu + \theta) S_a,\nonumber 
    \end{eqnarray}
	
	Following \cite{vargas2011global}, a suitable lyapunov function can be formulated as follows
	
	\begin{align*}
	\mathcal{L}=\left[ (S_u - S_u^*) + (S_a-S_a^*) - (S_u^* + S_a^*) ln \frac{S_u+S_a}{S_u^* + S_a^*} \right] + \frac{2 \mu (S_u^* + S_a^*)}{\alpha S_a^*} \left(S_a - S_a^* - S_a^* ln \frac{S_a}{S_a^*}\right),
	\end{align*}
	where, $S_u^* = \frac{\Pi(\mu+\theta)}{\mu \alpha}$ and $S_a^* = \frac{\Pi [\alpha-(\mu+\theta)]}{\mu \alpha}$.\\
	
	Therefore by combining all above equations, it follows that each solution of the model equations \eqref{EQ:eqn 2.1}, with initial conditions $\in \Omega$ , approaches $E_1$ as $t\rightarrow \infty $ for  $max\left[\frac{\beta \gamma}{(\sigma_1+\mu + \delta_1)(\mu + \gamma)},\frac{\beta \nu \gamma}{m_2(\mu + \gamma)}\right]<1$. 
\end{proof}

\subsubsection{Existence of awareness-free endemic equilibrium} 
Let $E_2=(S_u^*, 0, E^*, I^*, J^*, R^*)$ be any AFEE of system \eqref{EQ:eqn 2.1}. Let us denote
\begin{align*}
m_1&=\eta+\sigma_1+\mu+\delta_1,\\
m_2&=\sigma_2+\mu+\delta_2,\\
m_4&=\frac{(1-p)\gamma}{m_1},\\
m_5&=\frac{p \gamma}{m_2} + \frac{\eta (1-p)\gamma}{m_1 m_2}.
\end{align*}

Further, the force of infection be 
\begin{align}\label{EQ:eqn 3.33}
\lambda_h^*=\frac{\beta [I^* + \nu J^*]}{N^*}
\end{align}
By setting the right equations of system \eqref{EQ:eqn 2.1} equal to zero, we have
\begin{align}\label{EQ:eqn 3.43}\nonumber
S_u^*&=\frac{\Pi}{\lambda_h^* + \mu},\\\nonumber 
E^*&=\frac{\lambda_h^*S_u^*}{\gamma + \mu},
I^*=m_4 E^*, \\\nonumber
J^*&=m_5 E^*,
R^* =\frac{\sigma_1 m_4 E^* + \sigma_2 m_5 E^*}{\mu},\\
N^* &=\frac{\Pi - \delta_1 m_4 E^* - \delta_2 m_5 E^*}{\mu}.
\end{align}

After simplification, we have the expression of $E^*$ as follows:
$$ E^* = \frac{\Pi}{(\beta - \delta_1) m_4 + (\beta \nu - \delta_2)m_5}\left[ R_1 - 1 \right] $$, where $R_1$ is same as the threshold quantity for the AFDFE, given by

$$R_1=\frac{\beta \gamma}{m_2 (\mu + \gamma)} \left[(1-p)\frac{\eta \nu + m_2}{m_1} + p \nu \right]$$.

Therefore, the AFEE will exist if $R_1>1$ and $\beta > max\{ \delta_1, \frac{\delta_2}{\nu}\}$.

\subsubsection{Existence of endemic equilibrium} 
Let $E_{**}=(S_u^{**}, S_a^{**}, E^{**}, I^{**}, J^{**}, R^{**})$ be any endemic equilibrium of system \eqref{EQ:eqn 2.1}. Let us denote
\begin{align*}
m_1&=\eta+\sigma_1+\mu+\delta_1, m_2=\sigma_2+\mu+\delta_2,\\
m_3 &= \frac{1}{\alpha} \left[ (\theta + \mu) + \epsilon \{\alpha - (\theta + \mu)\} \right], m_4=\frac{(1-p)\gamma}{m_1},\\
m_5&=\frac{p \gamma}{m_2} + \frac{\eta (1-p)\gamma}{m_1 m_2}, m_6= \beta (m_4 + \nu m_5),
m_7= \frac{\mu + \theta}{\alpha \mu}.
\end{align*}

Further, the force of infection be 
\begin{align}\label{EQ:eqn 3.3}
\lambda_h^{**}=\frac{\beta [I^{**} + \nu J^{**}]}{N^{**}}
\end{align}
By setting the right equations of system \eqref{EQ:eqn 2.1} equal to zero, we have
\begin{align}\label{EQ:eqn 3.4}\nonumber
S_u^{**}&=\frac{N^{**}}{\alpha} \left( \epsilon \lambda_h^{**} + \mu + \theta \right),\nonumber 
S_a^{**}=\frac{\Pi \alpha - N^{**}( \lambda_h^{**} + \mu)( \epsilon \lambda_h^{**} + \mu + \theta)}{\alpha( \epsilon \lambda_h^{**} + \mu)},\\\nonumber 
E^{**}&=\frac{\lambda_h^{**}(S_u^{**} + \epsilon S_a^{**})}{\gamma + \mu},
I^{**}=m_4 E^{**}, \\\nonumber
J^{**}&=m_5 E^{**},
R^{**} =\frac{\sigma_1 m_4 E^{**} + \sigma_2 m_5 E^{**}}{\mu},\\
N^{**} &=\frac{\Pi - \delta_1 m_4 E^{**} - \delta_2 m_5 E^{**}}{\mu}.
\end{align}

From Equations \eqref{EQ:eqn 3.3} and \eqref{EQ:eqn 3.4}, we have
\begin{align}\label{EQ:eqn 3.5}
\lambda_h^{**}&=\frac{\beta (m_4 + \nu m_5) E^{**}}{N^{**}}\\ \nonumber
&=\frac{\beta (m_4 + \nu m_5) \lambda_h^{**}}{\mu + \gamma} \left(\frac{S_u^{**}}{N^{**}} + \epsilon \frac{S_a^{**}}{N^{**}}\right)
\end{align}

This implies

\begin{align}\label{EQ:eqn 3.6}
\frac{\mu + \gamma}{\beta (m_4 + \nu m_5)}=\frac{1}{\alpha} \left( \epsilon \lambda_h^{**} + \mu + \theta \right) + \frac{\Pi \alpha \epsilon }{\alpha( \epsilon \lambda_h^{**} + \mu) N^{**}} - \frac{\epsilon( \lambda_h^{**} + \mu)( \epsilon \lambda_h^{**} + \mu + \theta)}{\alpha( \epsilon \lambda_h^{**} + \mu)}
\end{align}

Now, using expression of $N^{**}$ and equation \eqref{EQ:eqn 3.5}, we have

\begin{align}\label{EQ:eqn 3.7}
\frac{\Pi}{N^{**}}=\mu + \frac{(\delta_1 m_4 + \delta_2 m_5)\lambda_h^{**}}{\beta (m_4 + \nu m_5)} 
\end{align}

Putting the value of $\frac{\Pi}{N^{**}}$ in equation \eqref{EQ:eqn 3.6} and simplifying, we obtain

\begin{align}\label{EQ:eqn 3.8}
\lambda_h^{**}=\frac{\alpha \mu (\mu + \gamma)(R_0 - 1)}{\epsilon (\mu + \gamma)\{\alpha - \mu (1 - \epsilon) R_1\} - \alpha \epsilon (\delta_1 m_4 + \delta_2 m_5)} 
\end{align}

This indicate that the model \ref{EQ:eqn 2.1} has unique endemic equilibrium if it exists. The existence criterion for the EE is $R_0>1$ and $$\alpha(\mu + \gamma)>\mu (\mu + \gamma)(1 - \epsilon) R_1 + \alpha (\delta_1 m_4 + \delta_2 m_5)$$.

The stability analysis of the equilibria for the model (\ref{EQ:eqn 2.1}) is summarized in Table \ref{Tab:stability_existence}.

\begin{table}[ht]
	\centering
	\caption{\textbf{The local stability of equilibria for the Model (\ref{EQ:eqn 2.1}). LAS $\equiv$ Locally Asymptotically Stable and GAS $\equiv$ Globally Asymptotically Stable.}} \vspace{3mm}
	\begin{tabular}{p{2cm}p{5cm}p{8cm}}
		\hline
		\textbf{Equilibria} &  \textbf{Existence condition} & \textbf{Stability criterion}      \\
		\hline
		$E_0$ & Always exits & LAS as well as GAS if $max\left[R_1,\frac{\alpha}{\mu + \theta}\right]<1$  \\
		\hline
		$E_1$ & $\alpha>\theta+\mu$ & LAS if $R_0<1$ and GAS if $max\left[ R_0, \frac{\beta \gamma}{(\sigma_1+\mu + \delta_1)(\mu + \gamma)},\frac{\beta \nu \gamma}{m_2(\mu + \gamma)}\right]<1$ \\
		\hline
		$E_2$ & $\beta > max\{ \delta_1, \frac{\delta_2}{\nu}\}$ and $R_1>1$ &    \\
		\hline
		$E_3$ & $R_0>1$ and $\alpha(\mu + \gamma)>\mu (\mu + \gamma)(1 - \epsilon) R_1 + \alpha (\delta_1 m_4 + \delta_2 m_5)$ &  \\
		\hline
	\end{tabular}
	\label{Tab:stability_existence}
\end{table}

\section{Numerical bifurcation analysis}\label{Numerical_analysis}
In this section, various possibilities of forward transcritical bifurcations are examined based on the stability analysis of the four equilibrium points of the model \eqref{EQ:eqn 2.1}. To do the numerical experiments, the initial conditions are assumed to be,
\begin{align*}
N &= 10000, S_u(0)=0.9 \times N,\\
S_a(0) &= 100, E(0)=100, \\
I(0) &= 10, J(0) = 10, R(0)=0. 
\end{align*}

The fixed parameters used in this section are as follows: $\Pi=\mu \times N$, $\nu=0.05$, $\gamma=0.1$, $\eta=0.01$, $\sigma_1=0.05$, $\sigma_2=0.01$, $p=0.2$, $\delta_1=0.01$ and $\delta_2=0.003$.

From the stability analysis of the equilibia $E_0$ and $E_1$, it can be inferred that the $E_0$ will become unstable after a certain threshold of $\alpha$ namely $\mu+\theta$ whenever $R_1<1$. In this region when $\alpha>\mu+\theta$, the DFE will become LAS whenever $R_0<1$. This phenomenon is depicted in Fig. \ref{Fig:transcritical}(a). This type of phenomenon is called forward transcritical bifurcation where the two equilibrium points switch their stability at a critical value. 

\begin{figure}[ht]
	\includegraphics[width=0.45\textwidth]{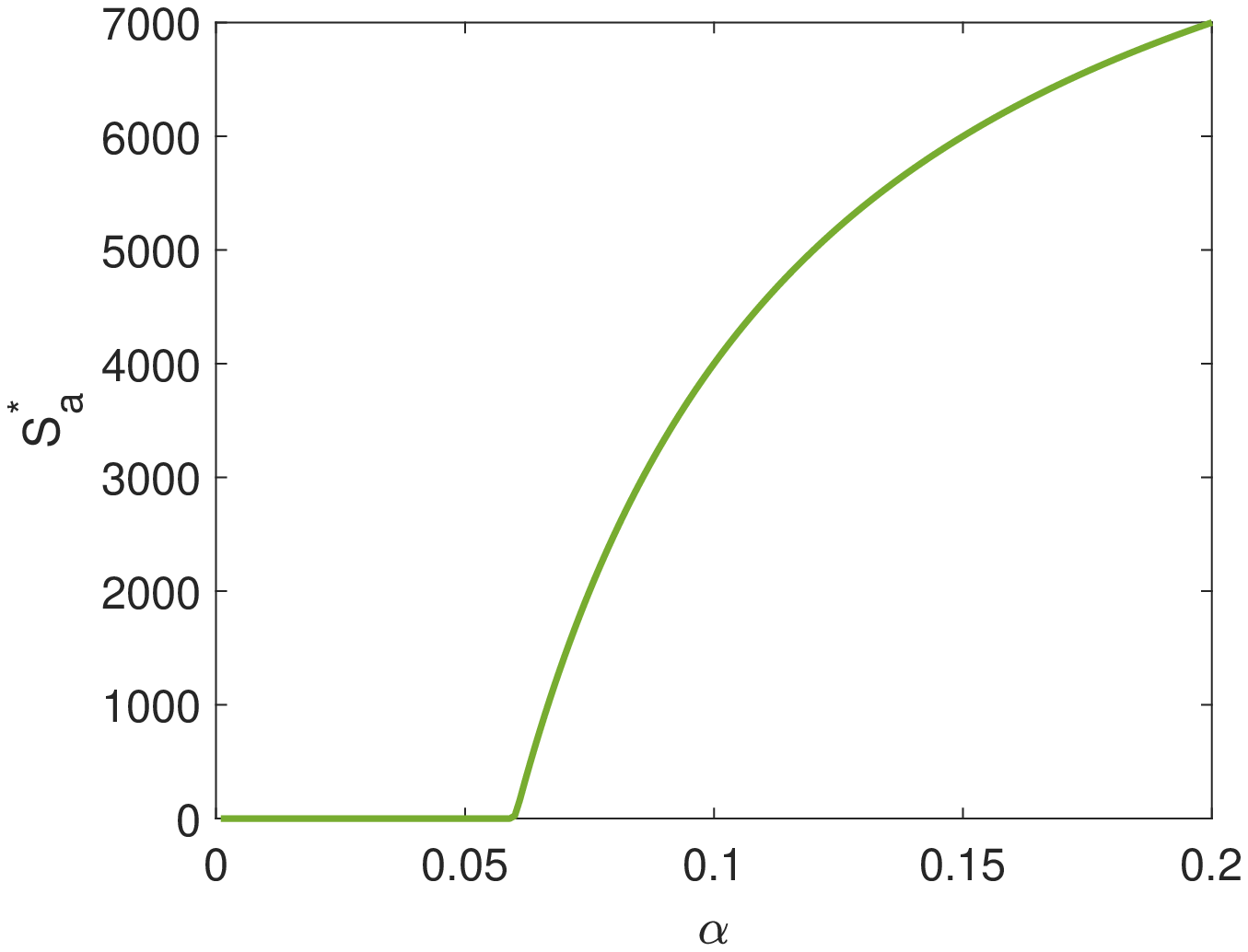}(a)
	\includegraphics[width=0.45\textwidth]{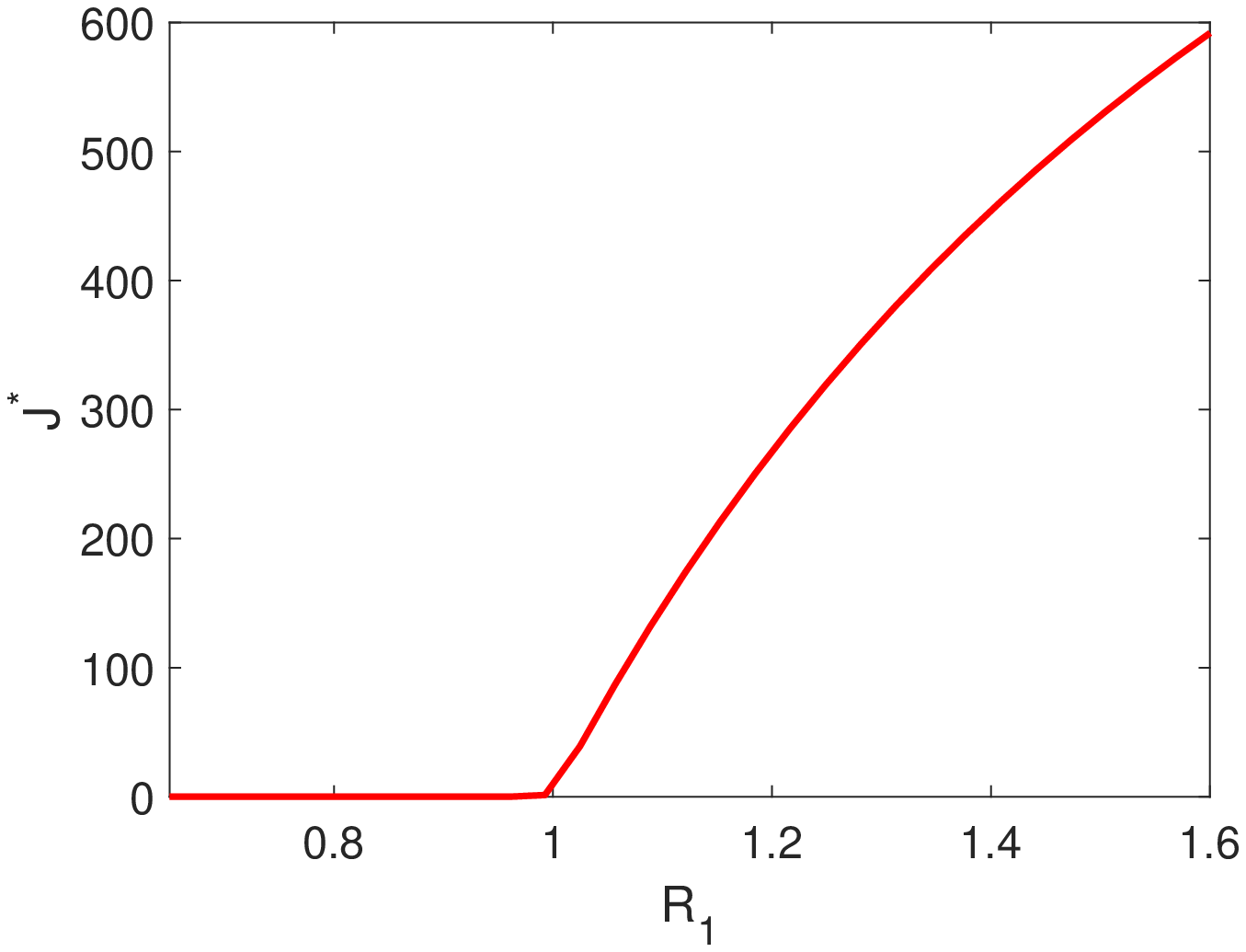}(b)
	\centering
	\includegraphics[width=0.45\textwidth]{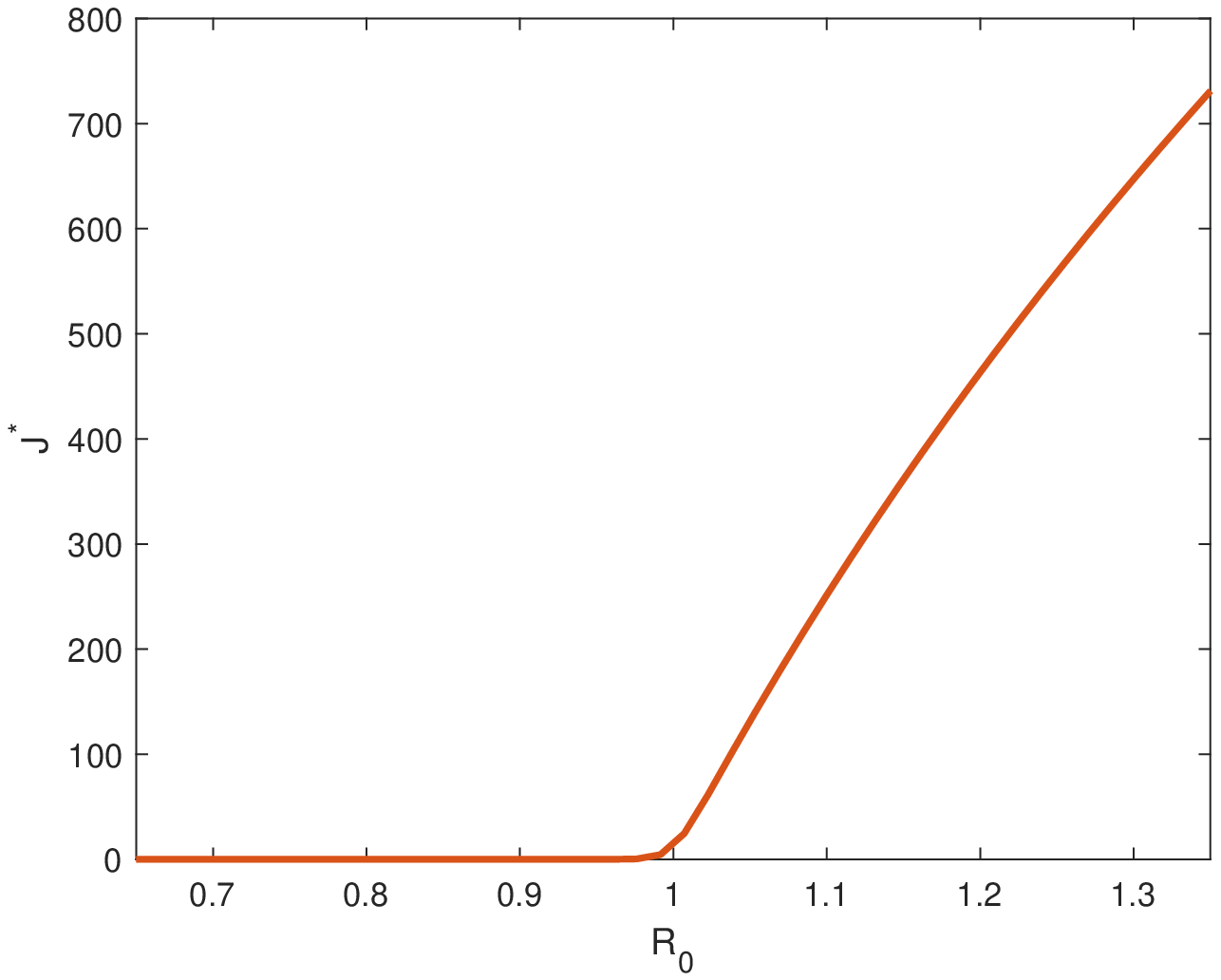}(c)
	\caption{(a) Transcritical bifurcation for the aware susceptibles at equilibrium ($S_a^*$) of the model \eqref{EQ:eqn 2.1}. Using the parameter values: $\mu=0.01$, $\beta=0.05$, $\theta=0.05$, $\epsilon=0.04$ and $0.001<\alpha<0.2$. For this parameter set, $R_1=0.4842$. (b) Transcritical bifurcation for the notified infected population at equilibrium ($J^*$) of the model \eqref{EQ:eqn 2.1}. Using the parameter values: $\mu=0.03$, $\alpha=0.03$, $\theta=0.03$, $\epsilon=0.04$ and $0.1<\beta<0.25$. (c) Transcritical bifurcation for the notified infected population at equilibrium ($J^*$) of the model \eqref{EQ:eqn 2.1}. Using the parameter values: $\mu=0.01$,  $\alpha=0.3$, $\theta=0.01$, $\epsilon=0.1$ and $0.4<\beta<0.9$.}
	\label{Fig:transcritical}
\end{figure}

Further, the forward transcritical bifurcation between the equilibria $E_0$ and $E_2$ is depicted in Fig. \ref{Fig:transcritical}(b). The reason behind this result is that the existence of $E_2$ depend on the threshold quantity $R_1$ of $E_0$. This result indicate that depending on parameter values, the awareness free, DFE can become unstable and one of $E_1$ or $E_2$ will become stable. 

Furthermore, the forward transcritical bifurcation between the equilibria $E_1$ and $E_3$ is depicted in Fig. \ref{Fig:transcritical}(c). The reason behind this result is that the existence of $E_3$ depend on the threshold quantity $R_0$ of $E_1$. This result indicate that depending on parameter values, the DFE can become unstable and $E_3$ will become stable.

\section{Case study on Colombia COVID-19 data}\label{Case_study_colombia}
\subsection{Data description}
As of August, $28^{th}$ 2020, there were more than $590$ thousand cases and above $18$ thousand deaths in Colombia. Daily COVID-19 notified cases and deaths of Colombia for the time period March $19^{th}$, 2020 to August $24^{th}$, 2020 is considered for our study. These 159 days COVID-19 notified cases and deaths were collected from \cite{Worldometer2020}. We use the first 149 data points to calibrate the unknown model parameters. In this time period, COVID-19 cases and deaths both display a upward trend for Colombia. This is an alarming situation as the pandemic continue to affect the country. This is why we chose Colombia for the case study. However, we use the remaining 10 data points to check the accuracy of the fitted model. The demographic parameter values and initial conditions for fitting the proposed model to data are given in Table \ref{Tab:demo_init-param-Table}.

\begin{table}[ht]	
	\centering
	\caption{\bf{Demographic parameter values for Colombia and initial conditions}}. \vspace{0.3cm}
	\begin{tabular}{cp{5cm}cc} \hline
		\textbf{Parameters/IC's} & \textbf{Description} & \textbf{Values} & \textbf{Reference} \\ \hline
		$N$ & Total population of Colombia & 50951997 & \cite{Worldometer2020}\\
		$\mu$ & Natural death rate or (life expectancy)$^{-1}$ & 0.3518 $\times$ $10^{-4}$ & \cite{Worldometer2020}\\  
		$\Pi$ & Recruitment rate &  & $\mu \times N$\\
		$S_u(0)$ & Initial number of unaware susceptible  & $0.9 \times N$ & -- \\ 
		$S_a(0)$ & Initial number of aware susceptible & 100 & -- \\
		$J(0)$ & Initial number of notified patients & 2 & Data\\ 
		$R(0)$ & Initial number of recovered patients & 0 & -- \\
		\hline
	\end{tabular}
	\label{Tab:demo_init-param-Table}
\end{table}

\subsection{Model calibration}
We fit the model \eqref{EQ:eqn 2.1} to daily new notified cases of COVID-19 for Colombia. Fixed parameters of the model \eqref{EQ:eqn 2.1} are given in Table \ref{tab:mod1}. The demographic parameters related to Colombia and initial condition are reported in Table \ref{Tab:demo_init-param-Table}. We estimate three unknown model parameters such as: (a) the transmission rate of infected individuals ($\beta$), (b) ($\alpha$) and (c) ($\delta_2$) by fitting the model to newly daily reported cases. Additionally, two initial conditions of the model \eqref{EQ:eqn 2.1} were also estimated from the data, namely initial number of exposed individuals $E(0)$ and initial number of un-notified individuals $I(0)$. During the specified time period, nonlinear least square solver $lsqnonlin$ (in MATLAB) is used to fit simulated daily data to the reported COVID-19 cases and deaths in Colombia. We used Delayed Rejection Adaptive Metropolis algorithm \cite{haario2006dram} to generate the 95\% confidence region. An explanation of this technique for model fitting is given in \cite{ghosh2017mathematical}. The estimated parameters are given in Table \ref{Tab:estimated-parameters-Table}. The fitting of the daily new notified COVID-19 cases and deaths of Colombia is displayed in Fig. \ref{Fig:fitting_new_cases}. 

\begin{figure}[ht]
	\centering
	\includegraphics[width=0.49\textwidth]{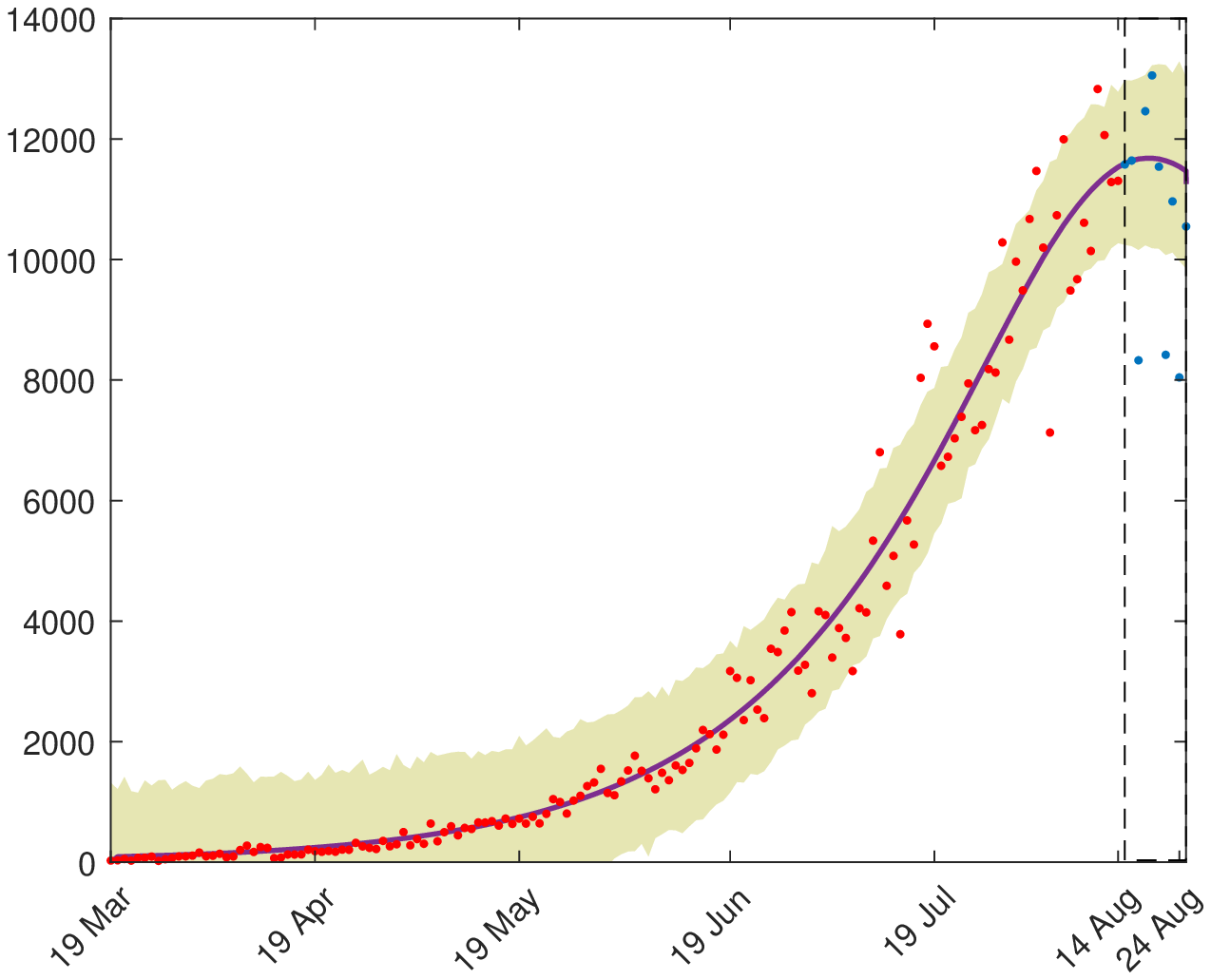}
	\includegraphics[width=0.49\textwidth]{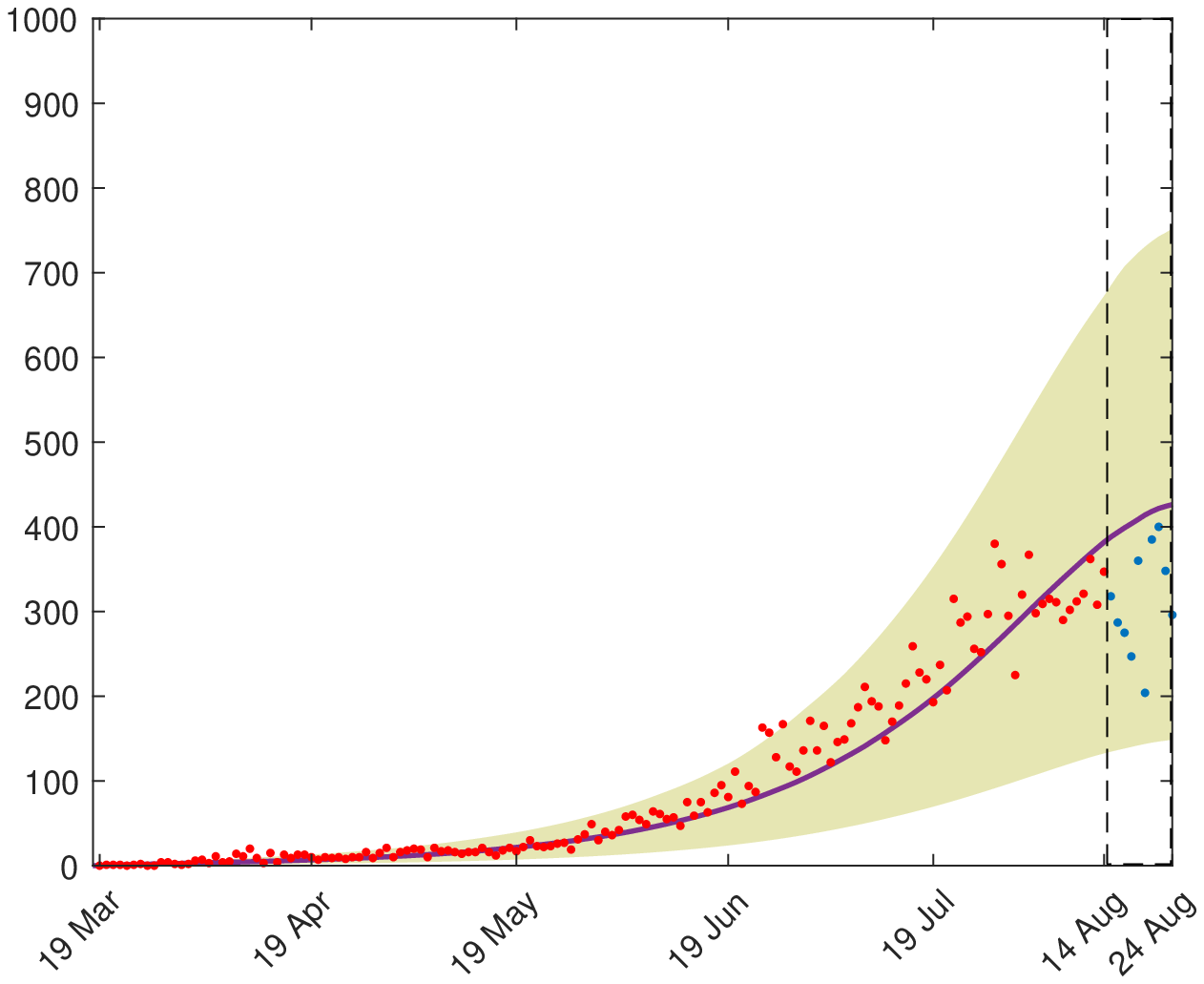}
	\caption{Model~(\ref{EQ:eqn 2.1}) fitting to daily notified COVID-19 cases and notified deaths due to COVID-19 in Colombia. Daily notified cases (deaths) are depicted in red dots and purple curve is the model simulation. The blue dots are test data points in both panels. Grey shaded region is the 95\% confidence region.}
	\label{Fig:fitting_new_cases}
\end{figure}

\begin{table}[ht]	
	\centering
	\caption{\bf{Estimated parameter values of the model~(\ref{EQ:eqn 2.1})}}. \vspace{0.3cm}
	\begin{tabular}{p{3.5cm}cc} \hline
		\textbf{Parameters} & \textbf{Mean values} & \textbf{ 95\% confidence interval} \\ \hline
		$\beta$ & 0.2584 & (0.2514 - 0.2662)\\ 
		$\alpha$ & 0.1069 & (0.1036 - 0.1099)\\  
		$\delta_2$ & 0.0032 & (0.0011 - 0.0054)\\
		$E(0)$ & 1919 & (1175 - 4147)\\ 
		$I(0)$ & 1668 & (984 - 3136)\\
		\hline
	\end{tabular}
	\label{Tab:estimated-parameters-Table}
\end{table}

Using test data points (August $15^{th}$, 2020 to August $24^{th}$, 2020), we calculate two accuracy metrics for these 10 test data points: Root mean squared error (RMSE) and mean absolute error (MAE). For the case fitting, we found that RMSE=2031.5 and MAE=1491.1. On the other hand for fitting COVID-19 deaths, RMSE=137.08 and MAE=125.05 are found. This indicate that the fitting are quite good for both scenarios.

Finally, we estimate the basic reproduction number $(R_0)$, for the proposed model \eqref{EQ:eqn 2.1}. We draw 1000 samples of the estimated parameters from their posterior distribution obtained from the MCMC run and put them in the expression of $R_0$. All the fixed parameters are taken from the Table \ref{tab:mod1} and Table \ref{Tab:demo_init-param-Table}. Estimated values of $R_0$ is found to be 0.7815 with 95\% confidence interval (0.7633 -- 0.8014).

\subsection{Sensitivity analysis}
We performed global sensitivity analysis to identify most influential parameters with respect to the total deaths due to COVID-19 in 6 months time frame (starting from March, $19^{th}$ 2020). Let us denote by $D_{total}$ the total number notified and un-notified deaths due to COVID-19. Partial rank correlation coefficients (PRCCs) are calculated and plotted in Fig. \ref{Fig:PRCCs}. Nonlinear and monotone relationship were observed for the parameters with respect to $D_{total}$, which is a prerequisite for performing PRCC analysis. Following Marino et. al \cite{marino2008methodology}, we calculate PRCCs for the parameters $\beta$, $\alpha$, $\theta$, $\nu$, $\epsilon$, $\delta_1$, $\delta_2$ and $\eta$. The following response function is used to calculate the PRCC values 
$$D_{total}=\int_0^T [\delta_1 I(t) + \delta_2 J(t)]\textup{d}t,$$
where T=180 days (chosen arbitrarily). The base values for the parameters $\beta$, $\alpha$ and $\delta_2$ are taken as the mean of estimated parameters reported in Table \ref{Tab:estimated-parameters-Table}. The other base values are taken as the fixed values displayed in Table \ref{tab:mod1}. For each of the parameters, 500 Latin Hypercube Samples were generated from the interval (0.5 $\times$ base value, 1.5 $\times$ base value).

\begin{figure}[t]
	\centering
	\includegraphics[width=0.7\textwidth]{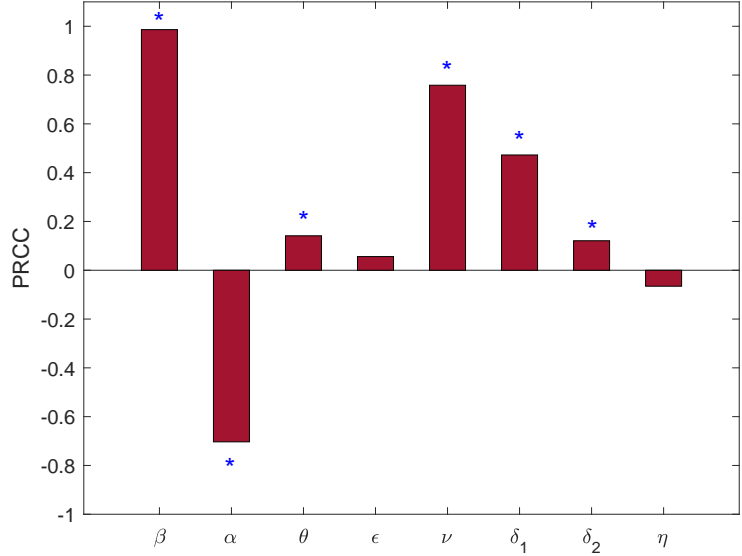}
	\caption{Effect of uncertainty of the model (\ref{EQ:eqn 2.1}) on the total number of deaths due to COVID-19 in Colombia. Parameters with significant PRCC indicated as $^*$ (p-value $<$ 0.05). The fixed parameters are taken from Table \ref{tab:mod1}.}
	\label{Fig:PRCCs}
\end{figure}

It is observed that the parameters $\beta$, $\nu$, $\delta_1$, $\theta$ and $\delta_2$ have significant positive correlations with $D_{total}$. This indicates that transmission rate of COVID-19 will increase the total number deaths related to COVID-19. Besides, modification factor for notified patients and death rate of un-notified patients are positively correlated with $D_{total}$. On the other hand, the learning factor related to aware susceptibles has significant negative correlation with the response variable. These results reinforces the fact that $\beta$, $\nu$ and $\alpha$ are very crucial for reduction of COVID-19 cases in Colombia. 

\subsection{Future projections and control scenarios}
In this section, we focus on four controllable parameters namely learning factor related to aware susceptibles ($\alpha$), transmission rate ($\beta$), modification factor for notified patients or equivalently the efficacy of notified case containment ($\nu$) and reduction in transmission co-efficient for aware susceptibles ($\epsilon$). Transmission rate of COVID-19 can be reduced by social distancing, face mask use, PPE kit use and through use of alcohol based hand wash \cite{li2020early,cdcgov2020}. The parameters $\nu$ and $\epsilon$ can also be reduced through effective management of notified cases and through increased behavioral changes by aware susceptible respectively. On the other hand, $\alpha$ should be increased to reduce the burden of COVID-19 in the society. It can be increased if the aware people show more pro-social activities. However, the results from global sensitivity analysis suggest that $\beta$ is most effective in terms of reduction in total COVID-19 related deaths. The parameters $\alpha$ and $\nu$ were also found significant with respect to the response function. Now we visualize the impacts of these four parameters on the un-notified and notified COVID-19 cases in Colombia. Using the estimated parameters (see Table \ref{Tab:estimated-parameters-Table}) we predict un-notified and notified infections
in the coming 5 months (150 days) starting from August $15^{th}$, 2020. The baseline curve is determined by simulating the model with fixed parameters from Table \ref{tab:mod1} and mean values of estimated parameters from Table \ref{Tab:estimated-parameters-Table}. For different values of $\alpha$, $\beta$, $\nu$ and $\epsilon$, the case reduction in COVID-19 cases is
depicted in Fig. \ref{Fig:prevalance1} and \ref{Fig:prevalance2}.

\begin{figure}[ht]
	\centering
	\includegraphics[width=0.49\textwidth]{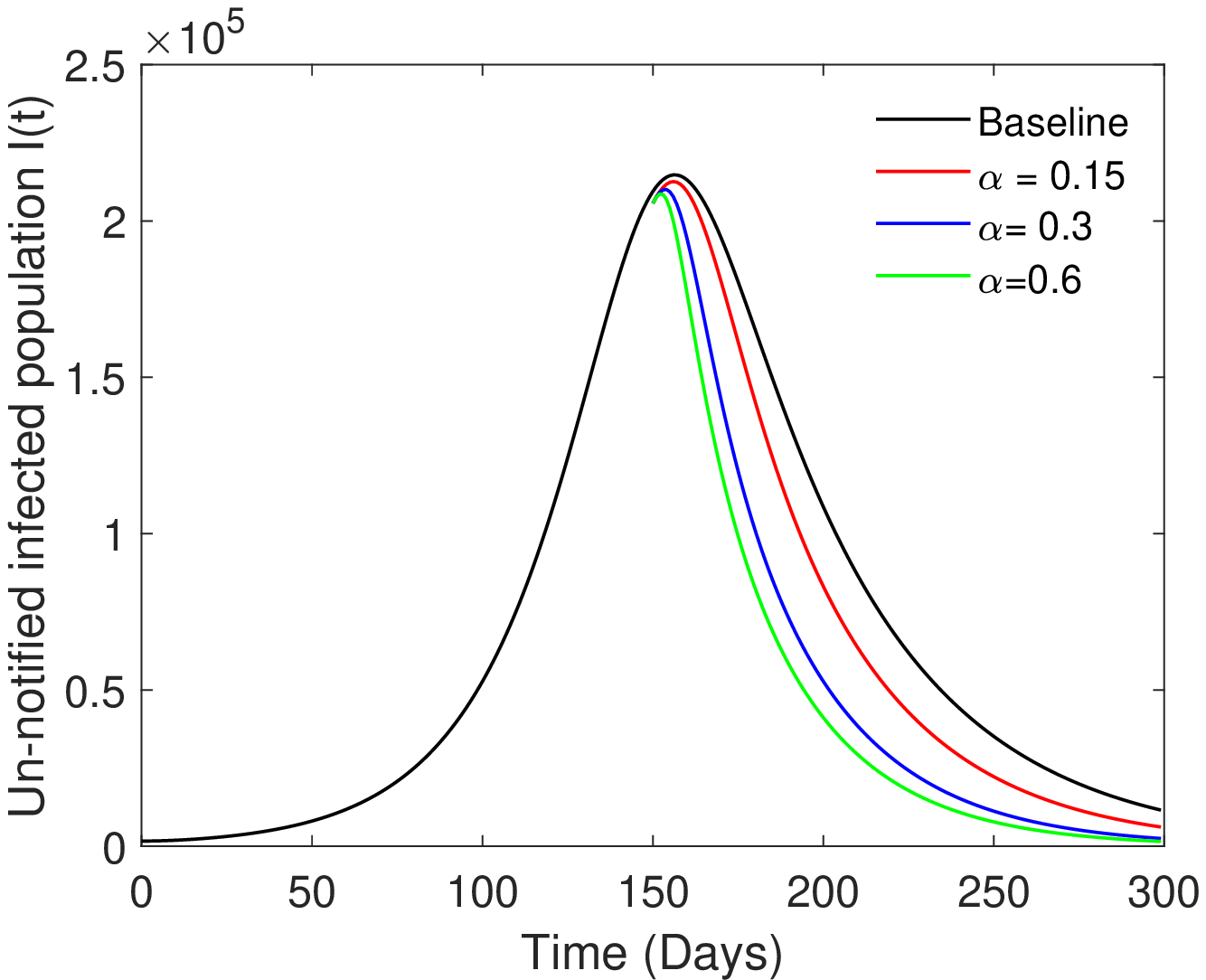}
	\includegraphics[width=0.49\textwidth]{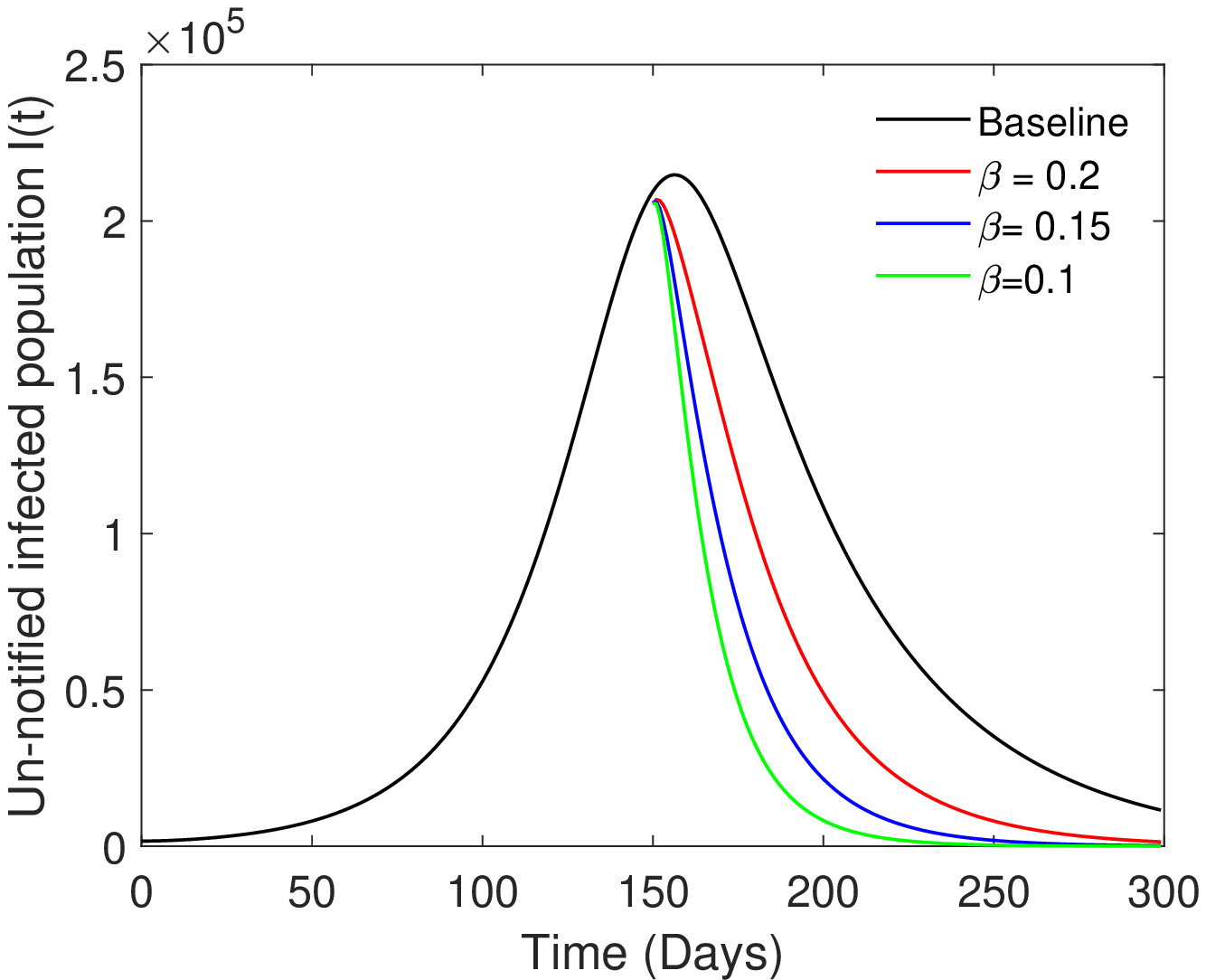}
	\includegraphics[width=0.49\textwidth]{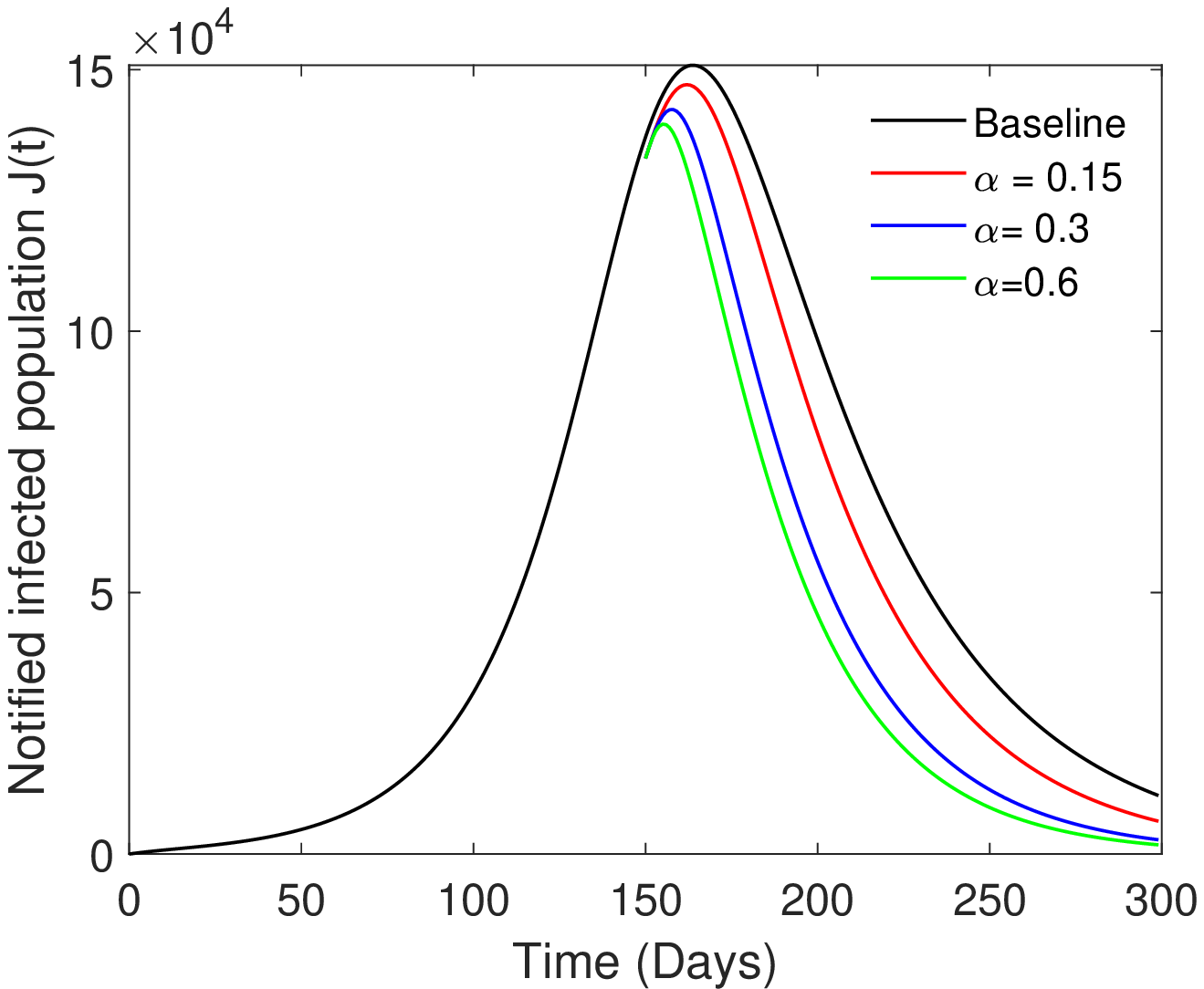}
	\includegraphics[width=0.49\textwidth]{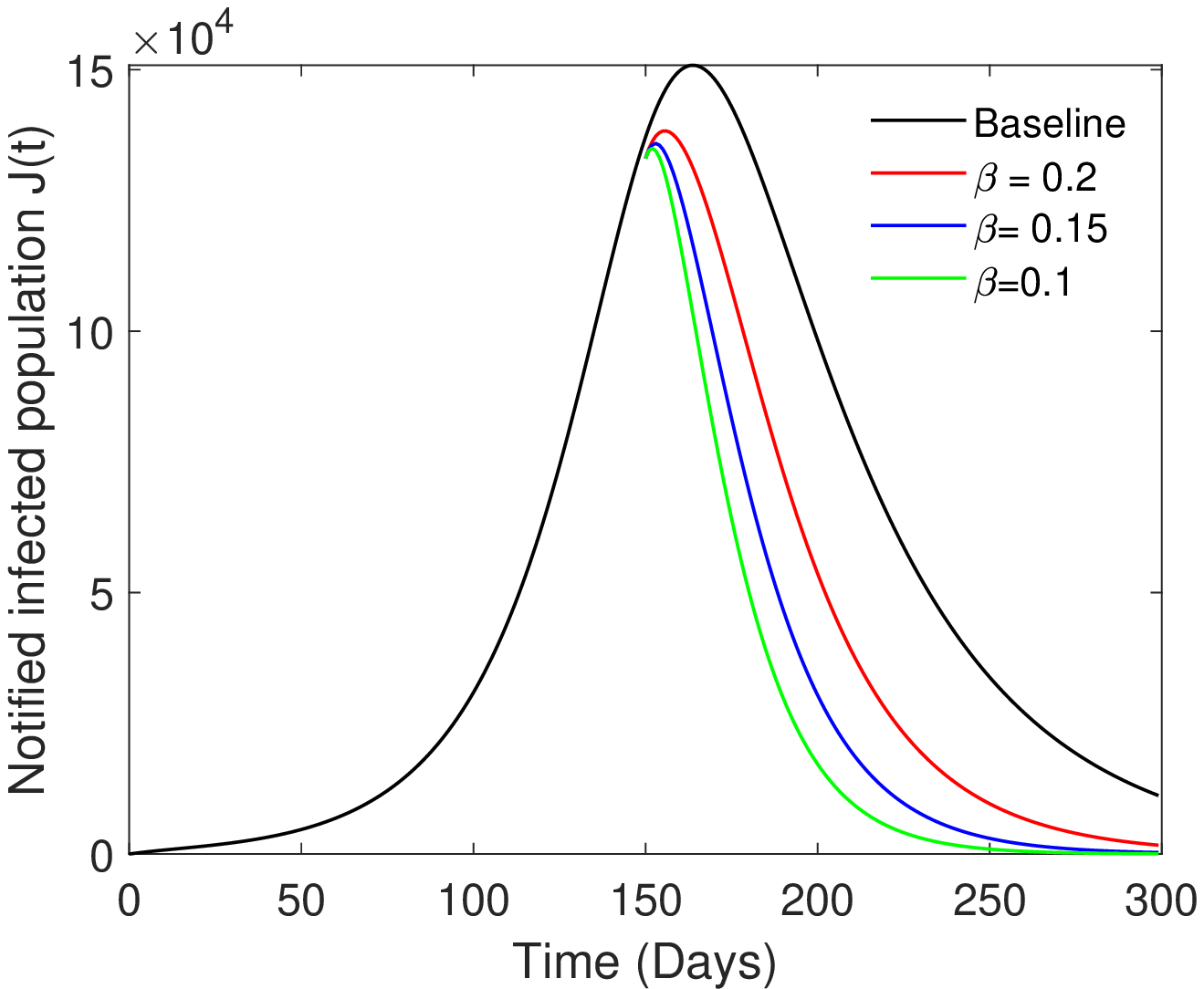}
	\caption{Effect of control parameters $\alpha$ and $\beta$ on the notified and un-notified COVID-19 cases.}
	\label{Fig:prevalance1}
\end{figure}

\begin{figure}[ht]
	\centering
	\includegraphics[width=0.49\textwidth]{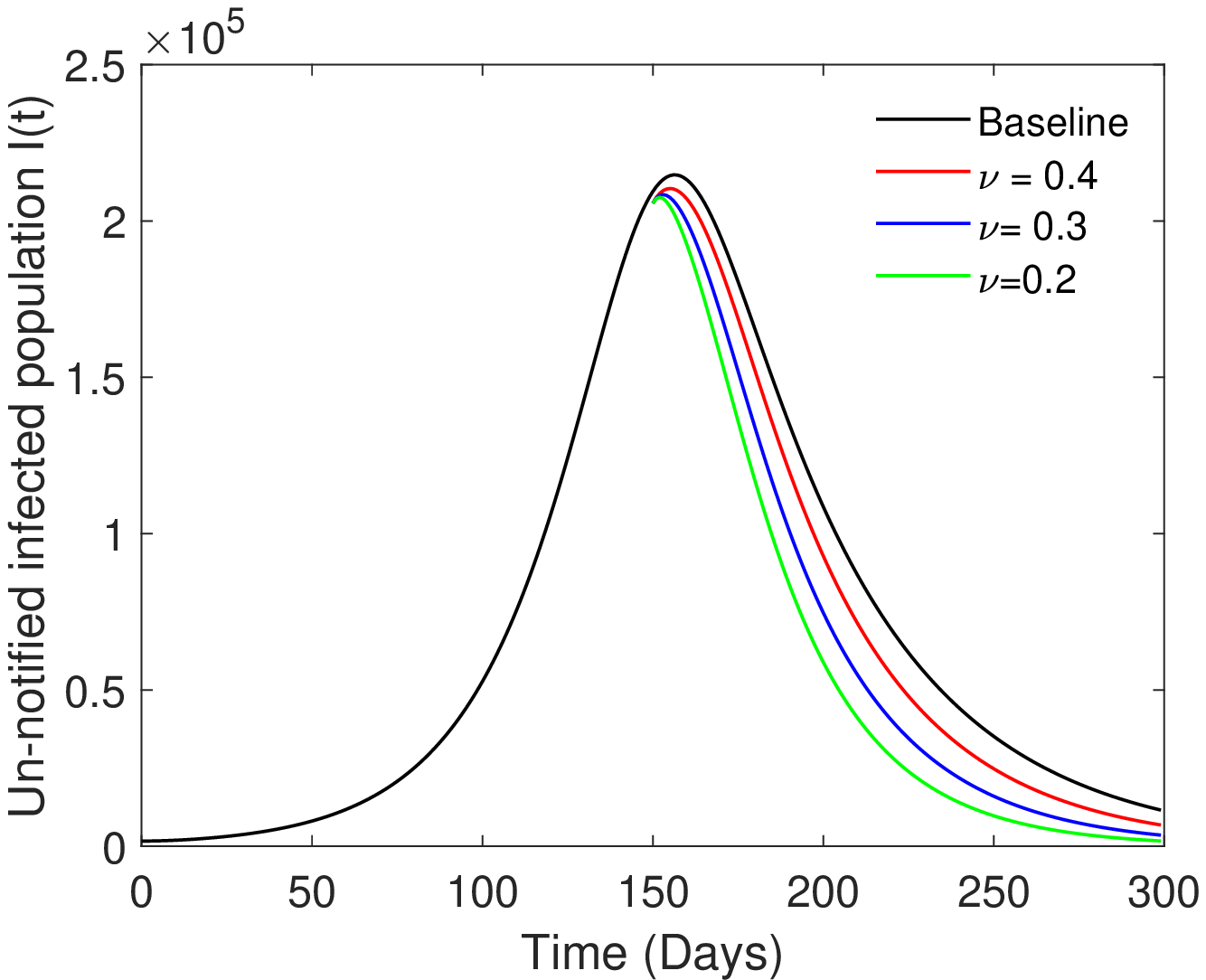}
	\includegraphics[width=0.49\textwidth]{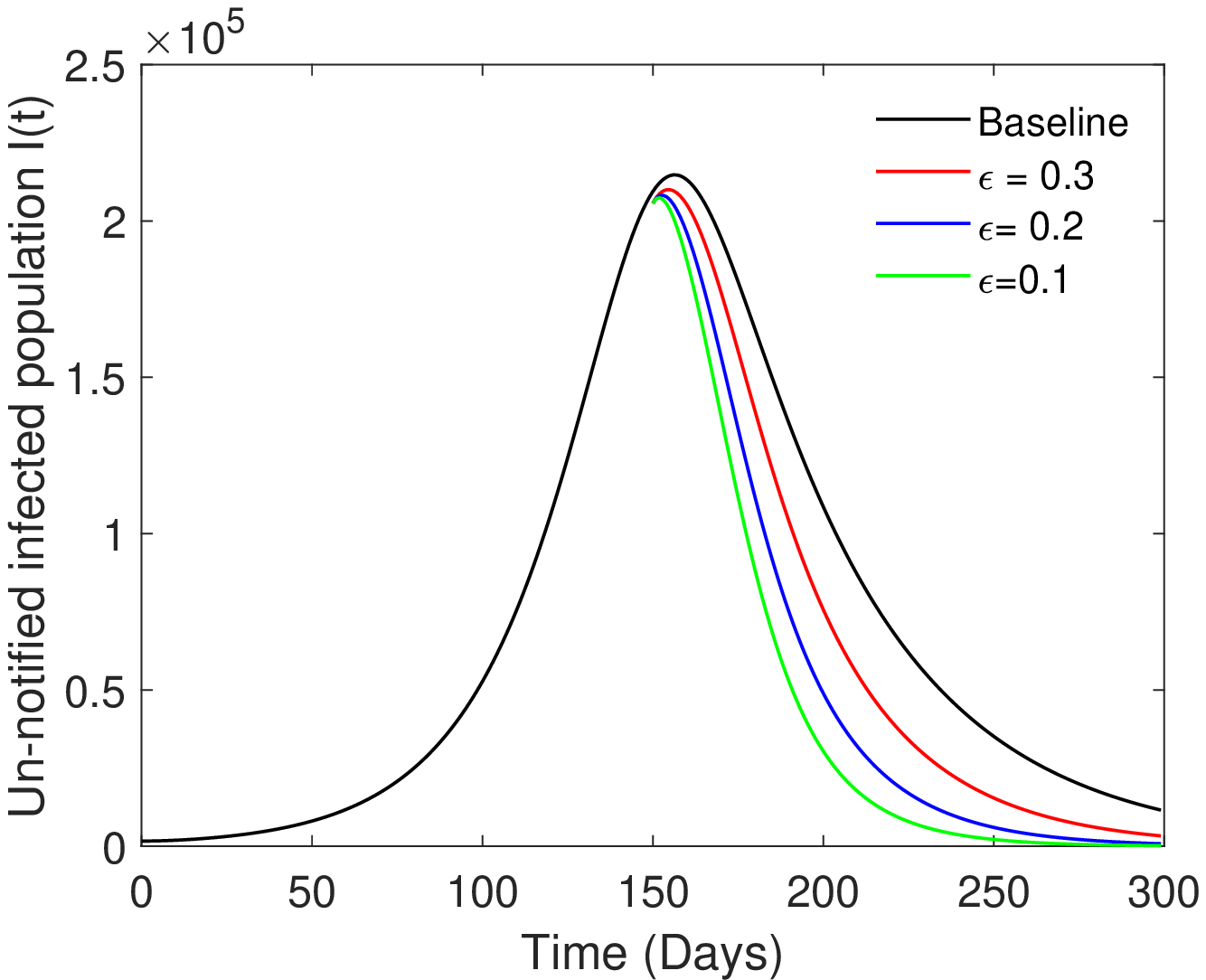}
	\includegraphics[width=0.49\textwidth]{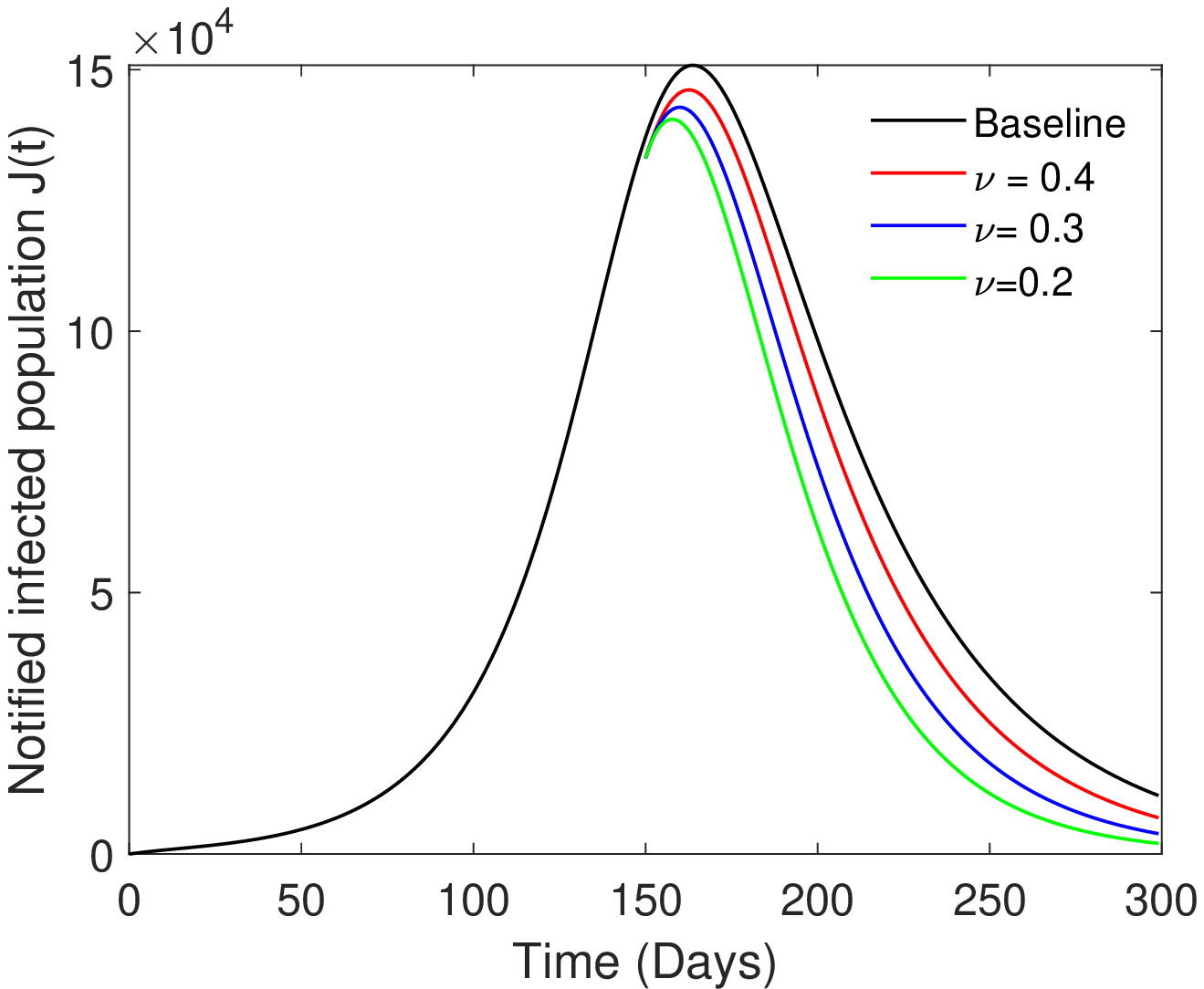}
	\includegraphics[width=0.49\textwidth]{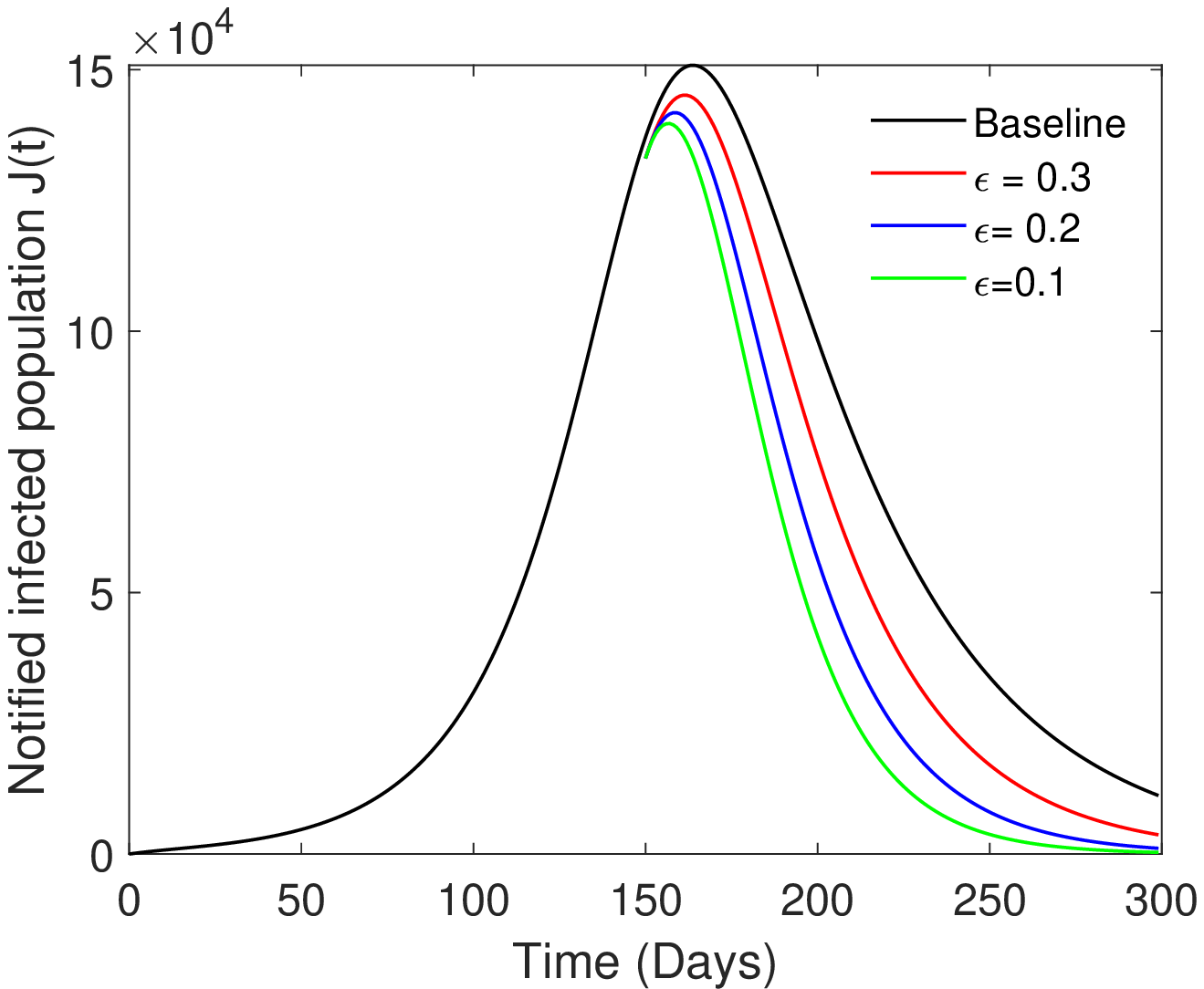}
	\caption{Effect of control parameters $\nu$ and $\epsilon$ on the notified and un-notified COVID-19 cases.}
	\label{Fig:prevalance2}
\end{figure}
It can be observed that all the four controllable parameters show similar trends in case reduction. However, $\beta$ and $\epsilon$ are showing sharp decay in cases.  Note that the scales of case reductions in the Figs. \ref{Fig:prevalance1} and
\ref{Fig:prevalance2} are similar, but we cannot quantify the
effectiveness of the parameters. Therefore, we calculate percentage reductions
in un-notified and notified COVID-19 cases in Colombia, Table
\ref{Tab:percent-reduction-Table}. The fixed parameter values are taken from Table
\ref{tab:mod1} while the values of other parameters are taken from
Table \ref{Tab:estimated-parameters-Table}. We used the following basic formula to compute
percentage reductions in the values of $I(t)$ and $J(t)$

\begin{eqnarray*}
	\textmd{Percentage reduction}=\frac{\textmd{Base value} -
		\textmd{Model output}}{\textmd{Base value}}\times 100.
\end{eqnarray*}

\begin{table}[ht]	
	\centering
	\caption{\bf{Percentage reduction in un-notified and notified COVID-19 cases for different controllable parameter values.}}\vspace{0.3cm}
	\begin{tabular}{cccc} \hline
		\textbf{Parameters} & \textbf{Values} &\textbf{Un-notified case reduction} & \textbf{Notified case reduction} \\ \hline
		$\alpha$ & 0.15 & 18.05 & 16.20 \\ 
		 & 0.3 & 38.73 & 34.71 \\  
		 & 0.6 & 47.91 & 42.98 \\ 
		\hline
        $\beta$ & 0.2 & 42.64 & 38.18 \\ 
        & 0.15 & 61.03 & 54.79 \\  
        & 0.1 & 71.63 & 64.41 \\ 
        \hline
		$\nu$ & 0.4 & 13.46 & 12.08 \\ 
		& 0.3 & 25.75 & 23.00 \\  
		& 0.2 & 35.95 & 32.12 \\ 
		\hline	
		$\epsilon$ & 0.3 & 24.38 & 21.73 \\ 
		& 0.2 & 40.40 & 36.08 \\  
		& 0.1 & 50.99 & 45.65 \\ 
		\hline
	\end{tabular}
	\label{Tab:percent-reduction-Table}
\end{table}

From Table \ref{Tab:percent-reduction-Table}, it can be argued that $\beta$ is most effective in reduction of COVID-19 cases. Maximum reduction in $\epsilon$ (=0.1) can reduce notified COVID-19 cases upto 45.65\%. Learning factor related to aware susceptibles ($\alpha$) also has competitive potential to reduce the COVID-19 cases in the community.

\section{Discussion and conclusion}\label{Discussion}
This paper provides a deterministic model for the transmission dynamics of COVID-19 outbreak incorporating prosocial behaviour. The model, which adopts standard incidence functions in a realistic way, allows COVID-19 to be transmitted by un-notified and notified individuals. To gain insight into its dynamic features, the model was rigorously analyzed. The findings obtained are as follows. There are four type of equilibrium points of the proposed model: awareness free, disease free equilibrium ($E_0$), disease free equilibrium ($E_1$), awareness free endemic equilibrium ($E_2$) and endemic equilibrium point ($E_3$). The awareness free, disease free equilibrium is found to be globally asymptotically stable under a parametric condition (($max\{R_1, \frac{\alpha}{\theta + \mu}\}<1$)). The basic reproduction number ($R_0$) for the proposed model is calculated using the next-generation matrix method. The model has a locally-stable disease-free equilibrium whenever the basic reproduction number is less than unity. The global stability condition for the DFE is also presented using Lyapunov function. The existence and stability criterion of these four equilibria are presented in Table \ref{Tab:stability_existence}. Further, the two threshold quantities $R_1$ and $R_0$ are linearly dependent on each other by the relation $R_0=m_3 R_1$. Using a hypothetical parameter set, we show the stability switch or transcritical bifurcation between $E_0$ and  $E_1$ (see Fig. \ref{Fig:transcritical}(a)). Also, transcritical bifurcations are observed between the equilibria $E_0$ and $E_2$ (see Fig. \ref{Fig:transcritical}(b)) and between $E_1$ and $E_3$ (see Fig. \ref{Fig:transcritical}(c)).

We calibrated the proposed model parameters to fit Colombia's daily cases and death data during the time period of March $19^{th}$, 2020 to August $14^{th}$, 2020. Using test data points (August $15^{th}$, 2020 to August $14^{th}$, 2020), we found that RMSE=2031.5 and MAE=1491.1 for notified case fitting and for fitting COVID-19 deaths, RMSE=137.08 and MAE=125.05 are found. Thus, the fitting is pretty good for actual data in the study period. Using estimated parameters, the basic reproduction number is found to be 0.7815 with 95\% confidence interval (0.7633 -- 0.8014). Which indicate the success of Colombian government in containing this deadly disease. This is also reflected on the current decreasing phase of the notified cases of COVID-19 in Colombia. Global sensitivity analysis is performed with respect to the total number of COVID-19 related deaths. Results indicate that transmission rate ($\beta$), modification factor ($\nu$) and learning factor related to awareness of susceptibles ($\alpha$) are very crucial for reduction in disease related deaths. We have also investigated the impact of four controllable parameters on the prevalence of un-notified and notified COVID-19 cases. From Fig. \ref{Fig:prevalance1} and \ref{Fig:prevalance2}, it can be observed that different level of controls can significantly reduce the burden of COVID-19 from community. However, to better quantify the impacts, we calculate the percentage reduction using a simple formula. This reveals that reduction in transmission rate is most effective in reducing un-notified and notified COVID-19 caese (see Table \ref{Tab:percent-reduction-Table}). Increase in the learning factor ($\alpha$) has competitive potential to flatten the curve. This finding reinforces the need for amplified campaigns and awareness made by individual aware susceptible persons. Prosocial behavior is needed to combat this highly infectious pandemic disease. Along with verified control strategies, the governments should promote prosocial awareness by aware people. This will definitely benefit the case reduction as well as management of future COVID-19 cases.

\bibliographystyle{unsrt}
\biboptions{square}
\bibliography{bib}

\end{document}